\definecolor{blue}{rgb}{0,0,1}
\newtheorem{theorem}{Theorem}[section]
\newtheorem{definition}[theorem]{Definition}
\newtheorem{corollary}[theorem]{Corollary}
\newenvironment{proof}[1][Proof]{\begin{trivlist}
\item[\hskip \labelsep {\bfseries #1}]}{\end{trivlist}}
\newcommand{\qed}{\nobreak \ifvmode \relax \else
      \ifdim\lastskip<1.5em \hskip-\lastskip
      \hskip1.5em plus0em minus0.5em \fi \nobreak
      \vrule height0.75em width0.5em depth0.25em\fi}
\def\BState{\State\hskip-\ALG@thistlm}
\newcommand{\bitem}{\begin{itemize}}
\newcommand{\eitem}{\end{itemize}}
\newcommand{\benum}{\begin{enumerate}}
\newcommand{\eenum}{\end{enumerate}}
\newcommand{\beq}{\begin{equation}}
\newcommand{\eeq}{\end{equation}}
\newcommand{\beqs}{\begin{equation*}}
\newcommand{\eeqs}{\end{equation*}}
\newcommand{\be}{\begin{eqnarray}}
\newcommand{\ee}{\end{eqnarray}}
\newcommand{\ba}{\begin{eqnarray*}}
\newcommand{\ea}{\end{eqnarray*}}
\begin{document}
\bibliographystyle{cj}
\title{Multiple Hypothesis Testing To Estimate The Number Of Communities in Stochastic Block Models}
\author{Chetkar Jha and Mingyao Li and Ian Barnett}

\maketitle

\begin{abstract}
Clustering of single-cell RNA sequencing (scRNA-seq) datasets can give key insights into the biological functions of cells. Therefore, it is not surprising that network-based community detection methods are increasingly being used for the clustering of scRNA-seq datasets. The main challenge in implementing network-based community detection methods for scRNA-seq datasets is that these methods \emph{apriori} require the true number of communities or blocks for estimating the community memberships. Although there are existing methods for estimating the number of communities, they are not robust for noisy scRNA-seq datasets. 
Moreover, we require an appropriate method for extracting suitable networks from scRNA-seq datasets. For addressing these issues, we present a two-fold solution: i) a simple likelihood-based approach for extracting stochastic block models (SBMs) out of scRNA-seq datasets, ii) a new sequential multiple testing (SMT) method for estimating the number of communities in SBMs. Additionally, the main challenge with existing data analyses of scRNA-seq datasets is that they are sensitive to the choice of hyper-parameters. Our proposed approach also adresses this. 

We study the theoretical properties of SMT and establish its consistency under moderate sparsity conditions. We compare the numerical performance of the SMT with several existing methods. We also show that our approach performs competitively well against existing methods for estimating the number of communities on benchmark scRNA-seq datasets. Finally, we use our approach for estimating subgroups of a human retina bipolar single cell dataset. In the above instances, we never fix our hyper-parameters but rather select the best set of hyper-parameters through a grid-search.
\end{abstract}

\section{Introduction}
\subsection{Motivation}
Single-cell RNA sequencing (scRNA-seq) technologies have enabled us to collect large repositories of transcriptomic datasets. Variation in transcriptomic data at the cell level constitutes cellular heterogeneity. The main interest in molecular biology is to identify all observed cellular heterogeneities that are of biological relevance (Alstchuler and Wu (2010)\cite{altschuler2010}). A natural idea is to group these seemingly heterogeneous cells into groups of similar cells or clusters of cells. These cell clusters, in turn, can help us learn about the functional role of cells. However, the implementation of this scheme for scRNA-seq datasets is challenging due to the elevated noise. In particular, cells with low number of detected genes or contaminated cells constituting low-quality cells may increase the noise and obscure the biological signals thereby negatively affecting the clustering quality of scRNA-seq datasets.

Recently, several authors have proposed multiple approaches to tackle the problem of clustering of noisy scRNA-seq datasets. Broadly these ideas can be organized into the following steps:
i) filtering low-quality cells and genes, ii) log-normalizing the read counts, iii) reducing the number of features or performing dimension reduction, iv) estimating the number of communities, and v) clustering the transformed data; see Heumos et al. (2023) \cite{heumos2023} for a thorough discussion. The first step of filtering low-quality genes and cells are required for reducing the noise in scRNA-seq datasets whereas log-normalizing in the second step is required for boosting the signal strength while facilitating an even comparison of cells. Unfortunately, the first two steps cannot completely reduce the elevated noise level from scRNA-seq datasets and further steps are warranted. The existing methodology deals with the remaining noisiness of scRNA-seq datasets either through dimension reduction methods or by estimating the number of blocks. However, the existing approaches require fine-tuning of hyperparameters, see Shekhar et al. (2016) \cite{shekhar2016}, Kiselev et al. (2017) \cite{kiselev2017}. For instance, Louvain and Leiden algorithms implemented in Seurat (Satija et al. (2015) \cite{satija2015}), require the number of nearest neighbors beforehand to construct a weighted network for which they estimate the number of communities by modularity maximization approach (Newman and Girvan (2004) \cite{newman04}). These approaches require user-informed fine tuning of hyperparameters result in a clustering mechanism that is sensitive to hyperparameters.

A clustering process that divides the cells into multiple groups is extremely desirable as it helps us break-down the cell network into multiple homogeneous groups of similar cellular functions. This could give us insight into the latent biological process. This goal goes in hand with the the goal of developing clustering process that does not require frequent fine-tuning of hyperparameters and are fast algorithms. With these goals in mind, we study stochastic block models for scRNA-seq datasets.


\subsection{Stochastic Block Models} The stochastic block model (SBM) proposed by Holland et al. (1983) \cite{holland83} is one of the popular models for network data with the block or community structure. SBM simplifies the network model by assuming that the edge probability between any pair of nodes is solely determined by their respective community memberships. Although SBM has found favor in many application areas, it is still somewhat not adequate in the other application domains. Recently many extensions and variants of SBM were proposed. For instance, Airoldi et al. (2008) \cite{airoldi2008} proposed the mixed membership stochastic block model allowing nodes to belong to multiple networks. Karrer and Newman (2011) \cite{karrer2011} proposed degree corrected stochastic block models (DCSBMs) relaxing the assumption of homogeneity of nodes. Sischika and Kauermann (2025) \cite{sischika2025} combined a stochastic block model with a Graphon model. Currently, there are many existing methods for estimating the community structure of SBMs including modularity maximization (Newman and Girvan (2004) \cite{newman04}), Louvain modularity algorithms (Blondel et al. (2008) \cite{blondel2008}), likelihood-based approaches (Bickel and Chen (2009) \cite{bickel09}, Zhao et al. (2012) \cite{zhao12}, Choi et al. (2012) \cite{choi2012}, Amini et al. (2013) \cite{amini2013}), spectral clustering methods (Rohe et al. (2011) \cite{rohe11}, Lei and Ronaldo (2015) \cite{lei2015}, Joseph and Yu (2016) \cite{joseph2016}, Sengupta and Chen (2015) \cite{sengupta2015}) amongst others, see Zhao et al. (2017) \cite{zhao17} for a review. Several existing methods such as Newman and Girvan (2004) \cite{newman04}, Bickel and Chen (2009) \cite{bickel09}, Zhao et al. (2012) \cite{zhao12}, Qin and Rohe (2013) \cite{qin2013}, Amini et al. (2013) \cite{amini2013} have also shown to be consistent for both sparse and dense DCSBMs. However, invariably all of the existing community detection methods for SBMs require the true number of communities to be known beforehand.

\subsection{Estimating the Number of Communities in SBM} Fortunately, several methods have been proposed for estimating the number of communities in SBMs and DCSBMs. The existing methods could be grouped into five categories, namely: i) likelihood-based methods, ii) cross-validation based methods, iii) Bayesian methods, iv) semi-definite programming based methods, and v) spectral methods. Wang and Bickel (2017) \cite{wang17} proposed a likelihood ratio based method (LRBIC) that maximized a Bayesian information criterion to estimate the number of communities in SBMs. They use the likelihood ratio for performing a hypothesis test between $H_0: \hat{K}= K \text{ vs } H_1: \hat{K} > K$ which is sequentially used for estimating the number of communities. They established the consistency of LRBIC for the SBM as long as $K$ is fixed and the degree of the network (average number of edges per node) $d$ satisfy  $d \ge O(log(n))$. They extended their consistency result to DCSBM when the degree of the network $d$ was atleast of the order $O(n^{1/2}/ log(n))$. However, their approach is computationally expensive as it involves computing exponentially large number of terms. Ma et al. (2021) \cite{ma2019} fastened the above approach by maximizing a pseudo-likelihood ratio over several candidate number of communities for which the corresponding network was fit under a regularized spectral clustering method. They proved the consistency for their method for the fixed $K$ and semi-sparse regime where the tuning parameter $h_n \to 0, n \rho_n h_n \to \infty$ where $\rho_n$ is the sparsity parameter of the network. Motivated by Wang and Bickel (2017) \cite{wang17}, Jin et al. (2023) \cite{jin2023} proposed a new step-wise goodness of fit method (StGoF) to estimate the number of communities. The new GoF method has a phase transition property at the true $K$. Theoretically, their method is applicable for the case when the degree of the network $d \ge O(log(n))$ and $K$ is fixed. Compared to Wang and Bickel (2017) \cite{wang17}, the computation time of StGoF only increases quadratically $O(n^2)$ with $n$, which is still computationally expensive for large networks.

The cross-validation based approaches use network resampling strategies to generate multiple copies of the network. Subsequently, they use the cross-validation method to select the optimum number of communities. Particularly, Chen and Lei (2018) \cite{chen2018} and Li et al. (2020) \cite{li2020} proposed a network cross-validation (NCV) approach and an edge-cross validation approach (ECV) to estimate the number of communities. Li et  al. (2020)\cite{li2020}'s ECV approach is also valid for random dot product graph models. Theoretically, the two cross-validation approaches provide theoretical guarantees for not underestimating the number of communities when $K$ is fixed and the degree of the network satisfying $d \ge O(log(n))$. Overall, the cross-validation methods are computationally expensive and sensitive to overestimation problems.

Yan et al. (2018) \cite{yan2018} proposed a semi-definite relaxation method to estimate the number of communities in SBMs. Their objective function is the difference of the trace of the linear product of adjacency matrix with the clustering matrix and a tuning parameter multiplied with the trace of a clustering matrix. Theoretically, they proved that their method is consistent as long as the underlying signal (the difference within the block matrix probability and between block probability matrix) is large with $K$ being fixed and the degree of network satisfying $d \ge O(log(n))$. The requirement on the signal makes their method impractical for many real-life networks.

Cerqueira and Leonardi (2020) \cite{cerqueira2020} proposed a Bayesian approach for estimating the number of communities in SBMs while allowing the number of communities to increase with the network size $n$ at the maximum rate of $O(log(n))$. Recently, Cerqueira et al. (2023) \cite{cerqueira2023} extended \cite{cerqueira2020} to estimate the number of communities in DCSBMs under the semi-sparse regime, i.e., $\rho_n \to 0, n \rho_n \to \infty$. However, their method is computationally expensive involving computing integrated likelihood which makes their method unsuitable for large networks.
 
Spectral methods tend to rely on few extreme eigenvalues of a suitably modified adjacency matrix to estimate the number of blocks. Computing few extreme eigenvalues is computationally cheap even for large matrices making the spectral methods computationally efficient. One of the popular methods Lei (2016) \cite{lei16}'s Goodness-of-Fit (GoF) approaches uses the maximum and minimum eigenvalues of the generalized Wigner matrix to estimate a SBM network. Unlike majority of the existing methods, Lei (2016) \cite{lei16}'s GoF approach estimates the number of communities with $K$ increasing at the rate of $O(n^{1/6 
 - \tau})$ where $\tau > 0$. Unfortunately, GoF method is only applicable for dense SBM networks where the degree $d = O(n)$.
Other popular spectral approaches are Le and Levina (2022) \cite{le2022}'s BHMC and NB methods that utilize the spectrum of Bethe-Hessian matrix and Non-backtracking matrix respectively. Empirically, their methods are applicable for both sparse and dense SBMs/DCSBMs but theoretically they were only able to show the validity of their methods for a narrow region. They do allow for the number of communities to increase with $n$ but they do not give an analytical rate. Recently, Shao and Le \cite{le2024} proposed a non-backtracking matrix based approach for estimating the number of communities in sparse and imbalanced settings. However, they provided empirical and theoretical justification for only the Erd\"{o}s R\'{e}nyi case.
Also, Chen et al. (2023) \cite{chen2021}'s cross-validated eigenvalues use the idea that the sample eigenvectors under the null hypothesis should be orthogonal to the true latent dimensions. They adapt this idea to consistently estimate $K$ when the degree of the network $d \ge O(log(n))$. Their result holds for fixed $K$ where the $K$ largest eigenvalues satisfy some conditions related to the signal strength.

Most of the above methods view the problem of estimating the number of communities as a goodness of fit or model selection problem. Additionally, they make use of the entire network to estimate the number of communities. As we discuss, it will become clear that this does not come without a cost. In the present context of networks generated out of scRNA-seq datasets, the numerical simulations bear out the truth that we pay for using the methods in terms of computational time and/or comprising on accuracy when the out-in ratio is large. This is discussed later. 

Based on the preceding discussions, we propose a new approach for estimating the number of communities in sparse and dense SBMs. We require a method that has a satisfactory performance for sparse, noisy, and large networks. The high variability in noisy networks is partly due to the out-in ratio. The sensitivity of the out-in ratio is largely ignored in the statistical network literature. However, it is crucial for estimating the number of communities when the underlying network is highly sparse like in the present context. Our main observation is that a SBM consisting of $K$ blocks is equivalent to a  SBM consisting of $K$ distinct Erd\"{o}s R\'{e}nyi blocks. To avoid any ambiguity in identifying Erd\"{o}s R\'{e}nyi blocks in a SBM, we require that the edge probability with which edges are formed within Erd\"{o}s R\'{e}nyi blocks to be strictly greater than the edge probability with which edges are formed between a pair of Erd\"{o}s R\'{e}nyi blocks (assortativity). It immediately follows that estimating the number of blocks in a SBM is equivalent to estimating the number of Erd\"{o}s R\'{e}nyi blocks within the SBM.

Our main contributions are as follows: i) We propose a new approach for extracting SBM networks out of scRNA-seq datasets, ii) We propose a sequential multiple testing procedure (SMT) for estimating potentially large number of communities in semi-sparse/dense SBMs, iii) We prove that our estimator is consistent for estimating the true number of communities $K_{\star}^{(n)}$, iv) Numerically, we show that our method performs better even when the out-in ratio is large. 

\emph{Organization of the Paper}. The rest of the paper is organized as follows. Section \ref{prelim} introduces the basic set up by introducing the notations used in the paper while borrowing and defining key ideas in the paper. Particularly, we borrow a result on the limiting distribution of the second eigenvalue of appropriately scaled adjacency matrix. This section also introduces the SMT algorithm. Section \ref{theory} establishes the theoretical results involving SMT algorithm such as asymptotic null, asymptotic power, and consistency. Section \ref{evaluation} performs simulations and compares our method against other existing methods. Section \ref{data} introduces a new algorithm to extract SBM resembling networks out of scRNA-seq datasets. The section also performs the benchmark data analysis and real data analysis. Section \ref{discussion} summarizes the paper.

\section{Preliminaries}\label{prelim}
\subsection{Stochastic Block Model}
A stochastic block model (SBM) for a network consisting of $n$ nodes with $K_{\star}^{(n)}$ blocks is parametrized by a block membership vector $g$ and a symmetric block-wise edge probability matrix $G_{\rho_n}$. The block membership vector assigns nodes $\{1, \cdots, n \}$ to their respective blocks $\{1, \cdots, K_{\star}^{(n)} \}$ where the total number of blocks may vary with the block size $n$. 
Additionally, the SBM assumes that the probability of forming an edge between any pair of nodes $(i,j)$
can be completely characterized in terms of the probability of forming an edge between their respective blocks $g_i$ and $g_j$. The elements of $G_{\rho_n} = (\rho_n p_{ij})_{i,j =1, \cdots, K_{\star}^{(n)}}$ denotes the probability of forming an edge between the respective blocks $g_i$ and $g_j$. In particular, the mathematical relation between the node-wise edge probability matrix $P$ and the block-wise edge probability matrix $G_{\rho_n}$ is 

$$
P( A(i,j) =1 \mid g_i = k, g_j = k^{'} ) = G_{\rho_n}(k, k^{'}), 1 \le i,j \le n, 1 \le k, k^{'} \le K_{\star}^{(n)}.
$$

Additionally, we assume that $\{ p_{ij} \}_{i,j=1,\cdots, K_{\star}^{(n)} }$ do not depend on the network size $n$. The complexity of the SBM can be characterized in terms of the average degree of the network $d$ which is $O(n \rho_n)$. It is straightforward to observe that a network with larger degrees will have more observations for estimating the network parameters compared to a network with smaller degrees. In other words, the parameter estimation becomes a harder problem in sparse SBM networks.

Let $A = \{0,1\}^{n \times n}$ be the observed symmetric adjacency matrix with no self-loops (i.e., $A(i,i) =0$) for $1 \le i \le n$. Let every edge given $(g, G_{\rho_n})$ (up to a symmetric constant $A(i,j) = A(j,i)$) be an independent Bernoulli random variable, then the probability mass function of the adjacency matrix $A$ is
\be
P_{g, G_{\rho_n}}(A) = \prod_{1 \le i <j \le n} (G_{\rho_n}(g_i, g_j))^{A_{ij}} (1 - G_{\rho_n}(g_i,g_j))^{1 - A_{ij}}.
\ee

For the rest of the paper, we only consider the networks that are \emph{assortative} in the sense that the probability of forming an edge within blocks is greater than the probability of forming an edge between blocks. Additionally we assume that for any real symmetric Hermitian matrix of dimension $n \times n$, the eigenvalues are arranged in the descending manner $\lambda_1 > \lambda_2 > \cdots > \lambda_n$.

\subsection{Main Idea} Our approach is based on the observation that a SBM with $K_{\star}^{(n)}$ blocks consists of $K_{\star}^{(n)}$ distinct Erd\"{o}s R\'{e}nyi blocks. An Erd\"{o}s R\'{e}nyi graph is a random graph where edges between any distinct pair of nodes is assigned with a common probability $p$. For avoiding any ambiguity in identifying Erd\"{o}s R\'{e}nyi blocks in a SBM, we require that the edge probability within Erd\"{o}s R\'{e}nyi blocks to be strictly greater than the edge probability with which edges are formed between a pair of blocks (assortative graphs). Then, it immediately follows that estimating the number of blocks in a SBM is equivalent to estimating the number of distinct Erd\"{o}s R\'{e}nyi blocks within the SBM. We use this insight to propose the sequential multiple testing (SMT) method for estimating the number of blocks in a SBM.

Let $\{ A_{\star}^{(i)} = \{0, 1\}^{N_i \times N_i} \}_{i=1,\cdots,K}$ denote $K$ adjacency matrices corresponding to $K$ Erd\"{o}s R\'{e}nyi blocks with $N_i$ denoting the number of nodes belonging to $i^{th} $ Erd\"{o}s R\'{e}nyi block. Let $\{ p_{ii}\}_{i=1}^K$ denote the common node-wise probability with which the edges are formed in $K$ Erd\"{o}s R\'{e}nyi blocks. Using the above notations, we define the the scaled adjacency matrices $\{ M^{(i)}\}_{i=1}^K$ as follows

\be\label{scale.M}
M^{(i)}(u,v) = \begin{cases}
\frac{1}{\sqrt{N_i p_{ii} (1 - p_{ii})}} , & \text{ if } A^{(i)}_{\star}(u,v)=1, u,v =1, \cdots, N_i \\
0 , & \text{ if } A^{(i)}_{\star}(u,v) = 0, u, v =1 , \cdots, N_i,\\
\end{cases}
\ee\\

where 
$i=1, \cdots, K$.

\subsection{Second Eigenvalues of Erd\"{o}s R\'{e}nyi graphs} Notice that the first eigenvalue of an Erd\"{o}s R\'{e}nyi graph goes to infinity as the size of the Erd\"{o}s R\'{e}nyi graph increases. We can think of the first eigenvalue as a measure of signal. However, the second eigenvalue of the Erd\"{o}s R\'{e}nyi network when appropriately centered and scaled tend to remain stable which can be represented as the noise. We use this observation about the second eigenvalue of Erd\"{o}s R\'{e}nyi graph to identify an Erd\"{o}s R\'{e}nyi graph. To this end, we collect a pertinent result from \cite{leeTWCov16}.

\begin{theorem}\label{TW.thm}
Let $\epsilon > 0$, $N_i p_{ii} \ge N_i^{1/3}, \mu_i = N_i p_{ii}$ and $\tilde{\mu} = N_i (1 - p_{ii})$. Let $A^{(i)}_{\star}$ be the adjacency matrix generated from Erd\"{o}s R\'{e}nyi graph with $N_i$ nodes. Define $M^{(i)}$ as a scaled adjacency matrix defined in (\ref{scale.M}). Then the second eigenvalue of $M^{(i)}$ obeys the \emph{Tracy-Widom} distribution with a deterministic shift $\mu_i$, i.e., \\

\be\label{TW}
N_i^{2/3} (\lambda_2(M^{(i)}) - 2 -1/\mu_i) \to TW_1(\cdot),
\ee\\

where $TW_1(\cdot)$ is the Tracy-Widom distribution with Dyson parameter one.
\end{theorem}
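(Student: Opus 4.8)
The statement is, at heart, edge universality for the (sparse) Erd\H{o}s--R\'enyi random matrix, so the plan is to reduce $\lambda_2(M^{(i)})$ to the largest eigenvalue of a \emph{centred}, normalised adjacency matrix and then invoke the pertinent random-matrix results. Abbreviate $A = A^{(i)}_{\star}$, $p = p_{ii}$, $N = N_i$, and let $\mathbf{1} = (1,\dots,1)^{\top}$. Since $\E A = p(\mathbf{1}\mathbf{1}^{\top} - I)$, a one-line computation gives
\[
M^{(i)} \;=\; H \;+\; \theta\,\frac{\mathbf{1}\mathbf{1}^{\top}}{N} \;-\; \epsilon\, I , \qquad H \coloneqq \frac{A - \E A}{\sqrt{N p (1-p)}}, \quad \theta \coloneqq \frac{N p}{\sqrt{N p (1-p)}} = \sqrt{\tfrac{\mu_i}{1-p}}, \quad \epsilon \coloneqq \frac{p}{\sqrt{N p (1-p)}} .
\]
Here $H$ is a symmetric matrix with zero diagonal whose off-diagonal entries have variance exactly $1/N$ --- a sparse generalised Wigner matrix with sparsity parameter $q_i \coloneqq \sqrt{N p} = \sqrt{\mu_i}$. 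The hypothesis $\mu_i \ge N^{1/3}$ is exactly $q_i \ge N^{1/6}$, which places us at or above the threshold for Tracy--Widom behaviour of the extremal eigenvalues of such a matrix.

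\emph{Step 1 (peeling off the mean).} Since $\mu_i \to \infty$ we have $\theta \to \infty$, so the rank-one perturbation $\theta\,\mathbf{1}\mathbf{1}^{\top}/N$ is far supercritical in the Baik--Ben~Arous--P\'ech\'e sense: the top eigenvalue of $H + \theta\,\mathbf{1}\mathbf{1}^{\top}/N$ is an outlier near $\theta + \theta^{-1}$, separated from the bulk $[-2,2]$, while its second eigenvalue \emph{sticks} to the top eigenvalue of $H$, i.e.\ $\lambda_2\!\big(H + \theta\,\mathbf{1}\mathbf{1}^{\top}/N\big) = \lambda_1(H) + o(N^{-2/3})$ with high probability. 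This is the eigenvalue-sticking phenomenon for finite-rank deformations of Wigner-type matrices (Knowles--Yin and related work); it is most favourable here because the spike direction $\mathbf{1}/\sqrt{N}$ is completely delocalised, and the sparse isotropic local law for $q_i \ge N^{1/6}$ supplies the required input estimates. The leftover term $-\epsilon I$ is a deterministic scalar shift of size $\epsilon = \sqrt{p/(N(1-p))} = O(\sqrt{\mu_i}/N)$, which is $o(N^{-2/3})$ in the sparse range $\mu_i \ll N^{2/3}$ and in any case only alters the centring constant.

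\emph{Step 2 (Tracy--Widom for $\lambda_1(H)$).} By the local semicircle law for sparse random matrices (Erd\H{o}s--Knowles--Yau--Yin) and the ensuing edge-universality theorems (Lee--Schnelli; Huang--Landon--Yau), for $q_i \ge N^{1/6}$ one has $N^{2/3}\big(\lambda_1(H) - L_{+}\big) \to TW_1$, with $L_{+}$ the sparsity-corrected right spectral edge. The leading correction to $L_{+}$ is governed by the normalised fourth cumulant $N\,\mathrm{cum}_4(H_{uv}) = \frac{1 - 6 p(1-p)}{\mu_i (1-p)}$, which equals $\mu_i^{-1}(1+o(1))$ in the sparse regime; hence $L_{+} = 2 + \mu_i^{-1} + o(\mu_i^{-1})$. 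Since the discarded pieces --- the $-\epsilon I$ shift and the $o(\mu_i^{-1})$ tail of $L_{+}$ --- are $o(N^{-2/3})$, combining with Step~1 yields
\[
N^{2/3}\big(\lambda_2(M^{(i)}) - 2 - \mu_i^{-1}\big) \;\longrightarrow\; TW_1 ,
\]
with $TW_1$ the real (GOE-class) Tracy--Widom law, as is appropriate since $A$ is real symmetric.

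\emph{Main obstacle.} The delicate ingredient is the sticking step: carrying the Tracy--Widom limit from $\lambda_1(H)$ to the second eigenvalue of the rank-one deformation in the \emph{sparse} setting requires isotropic local-law and edge-rigidity bounds for $H$ that are sharp at scale $o(N^{-2/3})$ and uniform in the deformation. A secondary, bookkeeping, point is the deterministic centring: one must check that $-\epsilon I$ and the higher-order corrections to $L_{+}$ stay below $o(N^{-2/3})$, which in general wants $\mu_i = o(N^{2/3})$, and at the very threshold $\mu_i \asymp N^{1/3}$ the statement is borderline because there the fluctuation of the degree sequence/edge count contributes a Gaussian term of the same order as the Tracy--Widom term --- so the clean form above really needs $\mu_i \gg N^{1/3}$. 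Everything else reduces to a direct appeal to the cited random-matrix literature.
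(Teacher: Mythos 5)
Your proposal is correct in outline and follows essentially the same route as the paper, which does not prove Theorem \ref{TW.thm} at all but collects it from the sparse random-matrix literature (Erd\H{o}s--Knowles--Yau--Yin, Lee--Schnelli, Huang--Landon--Yau), whose proofs proceed exactly as you sketch: write $M^{(i)}$ as a centred sparse Wigner-type matrix plus a supercritical rank-one spike, use eigenvalue sticking to pass from $\lambda_2$ of the deformed matrix to $\lambda_1$ of the centred one, and invoke sparse edge universality with the $\mu_i^{-1}$ edge correction. Your caveats are also accurate and in fact sharper than the statement itself: Tracy--Widom with a deterministic centring requires $\mu_i \gg N_i^{1/3}$ (at the threshold the edge-count Gaussian term is of the same order), and the discarded $-\,p_{ii}\,/\sqrt{N_i p_{ii}(1-p_{ii})}$ identity shift is negligible only when $\mu_i \ll N_i^{2/3}$, so the clean centring $2+\mu_i^{-1}$ as written needs that semi-sparse window (or an adjusted constant in denser regimes).
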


Theorem \ref{TW.thm} characterizes Erd\"{o}s R\'{e}nyi graph using the limiting distribution of the second largest eigenvalue of the scaled adjacency matrix. In particular, the above theorem is valid when the average degree $d$ of the Erd\"{o}s R\'{e}nyi graph satisfies $d \ge \min_{i=1, \cdots, K} O(N_i^{1/3})$, where $N_i$ is the total number of nodes in the $i^{th}$ Erd\"{o}s R\'{e}nyi block. However, the above theorem is given in terms of unknown population parameter $G_{\rho_n}$ that requires estimation. As we discuss further, we give an estimable version of Theorem \ref{TW.thm}. Moreover, note that estimating $G_{\rho_n}$ would also require consistent recovery of the communities given the correct number of communities.

For recovering the true community membership, we collect results that guarantee recovering true community membership. Several methods can recover true true communities for the fixed $K_{\star}$ such as the profile likelihood method Bickel and Chen (2009) \cite{bickel09}, the spectral clustering method Lei and Zhu (2014) \cite{lei2014}. For $K_{\star}^{(n)}$ increasing with $n$, some methods can recover true communities for special cases such as planted partition models such as planted partition models see Chaudhuri et al. (2012) \cite{chaudhuri2012}, Amini and Levina (2018) \cite{amini2018}. Moreover for sparse SBMs, community detection methods such as Newman and Girvan (2004) \cite{newman04}, Bickel and Chen (2009) \cite{bickel09}, Zhao et al. (2012) \cite{zhao12}, Amini et al. (2013) \cite{amini2013} can also recover true community memberships.

\begin{definition}[(Consistency of Community Detection Methods)]\label{cons.def}
A sequence of SBMs (index with $n$) $\{ (g^{(n)}, G^{(n)}_{\rho_n}), n \ge 1 \}$ with $K_{\star}^{(n)}$ communities is said to have a consistent community membership estimator $\hat{g}(A, K_{\star}^{(n)})$, i.e.,\\
\ba
\lim_{n \to \infty} P(\hat{g} = g^{(n)} \mid A \sim (g^{(n)}, G^{(n)}_{\rho_n})) \to 1.
\ea
\end{definition}

\subsection{Sequential Multiple Tests} As discussed above, we observe that a SBM with $K_{\star}^{(n)}$ blocks consists of $K_{\star}^{(n)}$ Erd\"{o}s R\'{e}nyi blocks. For an Erd\"{o}s R\'{e}nyi graph with the common edge probability $p_{ii}$, we define the complement of the Erd\"{o}s R\'{e}nyi graph as follows. We connect the pair of nodes that were not connected originally and we disconnect the pair of nodes that had edges between them. Also, the complement of Erd\"{o}s R\'{e}nyi graph has no self-loops. It is easy to see that the complement of Erd\"{o}s R\'{e}nyi graph is also an Erd\"{o}s R\'{e}nyi with the common probability $(1 - p_{ii})$.

Let $\{ \hat{p}_{ii}\}_{i=1}^K$ be the common node-wise probability for $K$ Erd\"{o}s R\'{e}nyi blocks. Then, we define the estimated scale adjacency matrices and its complement as follows\\

\be\label{est.M}
\hat{M}^{(i)}(u,v) = \begin{cases}
\frac{1}{\sqrt{N_i \hat{p}_{ii}(1 - \hat{p}_{ii})}} , & \text{ if } A_{\star}^{(i)}(u,v)=1, u,v=1, \cdots, N_i, \\
0 , & \text{ if } A_{\star}^{(i)}(u,v) =0, u, v =1, \cdots, N_i
\end{cases}\\
\label{est.Mc}\hat{M}^{(i)}_c(u,v) = \begin{cases}
\frac{1}{\sqrt{N_{c, i} \hat{p}_{ii}(1 - \hat{p}_{ii})}} , & \text{ if } A_{\star}^{(i)}(u,v)=0, u,v=1, \cdots, N_{c, i}, \\
0 , & \text{ if } A_{\star}^{(i)}(u,v) =0, u, v =1, \cdots, N_{c,i}
\end{cases},
\ee\\
where $i=1, \cdots, K$., $N_i$ and $N_{c,i}$ denote the number of nodes in the $i^{th}$ Erd\"{o}s R\'{e}nyi block and its complement respectively.

For estimating the true number of communities, we construct a sequential test starting with $\hat{K}=1$ where $\hat{K}$ is incremented by one until the test is accepted. For $\hat{K} > 1$, we test whether $\hat{K}$ selected Erd\"{o}s R\'{e}nyi blocks are Erd\"{o}s R\'{e}nyi as well. However, it is evident that that keeping both candidate Erd\"{o}s R\'{e}nyi block and its complement for performing multiple testing is not useful as it would increase the type I error. Therefore, for every pair of candidate Erd\"{o}s R\'{e}nyi block and its complement, we only select one out of the pair for the purpose of multiple testing. Particularly, we select the $i^{th}$ candidate Erd\"{o}s R\'{e}nyi block or its complement depending on whether $\hat{G}_{\rho_n}(i,i) \le 0.5$ or $\hat{G}_{\rho_n}(i,i) > 0.5$. This is done in the above manner to optimize the power of the proposed sequential multiple test.

\begin{algorithm}[H]
\caption{Sequential Multiple Test Procedure}\label{alg:1}
\begin{algorithmic}[1]
\BState Initialize $\hat{K} =1$.
\BState\label{Step2} Use $\hat{K}$ to estimate community membership vector $\hat{g}$.
\BState Define adjacency matrices $A_{\star}^{(i)}$ for $i=1, \cdots, \hat{K}$.
\BState Using (\ref{est.M})-(\ref{est.Mc}), define estimated scaled adjacency matrices $M^{(i)}$ and $M^{(i)}_c$.
\BState Compute $\hat{G}_{\rho_n}$ as follows
\ba
\hat{G}_{\rho_n}(k,k') = \frac{\sum_{(s,t) : \hat{g}(s) =k, \hat{g}(t)= k' } A(s,t)}{\sum_{(s,t): \hat{g}(s)=k, \hat{g}(t) = k'} 1}, 1 \le k,k' \le \hat{K}.
\ea
\BState Compute the test statistics as follows
\be
T^{(i)}_{n, \hat{K}} = \begin{cases}
N_i^{2/3}(\lambda_2(\hat{M}^{(i)}) - 2 - \frac{1}{\mu_i}), & \text{ if } \hat{G}_{\rho_n}(i,i) \le 0.5 \\
N_{i}^{2/3} (\lambda_2(\hat{M})^{(i)} - 2 - \frac{1}{N_{i} - \mu_i}), & \text{ if } \hat{G}_{\rho_n}(i,i) > 0.5
\end{cases},
\ee
where $i =1, \cdots, \hat{K}$,
\be\label{t.stat}
T_{n, \hat{K}} = \max_{i=1, \cdots, \hat{K}} T^{(i)}_{n, \hat{K}},
\ee
where $\mu_i = N_i \star \hat{G}_{\rho_n}(i,i)$.
\BState For a specified significance level $\alpha$, conduct the multiple hypothesis test
\be\label{H0}
H_0: \text{ All $\hat{K}$ selected blocks are Erd\"{o}s R\'{e}nyi}, \\
\label{H1} H_1: \exists \text{ at least one block that is Erd\"{o}s R\'{e}nyi}.
\ee
\BState Accept $K = \hat{K}$ when $T_{n, \hat{K} \le TW_1(1-\alpha/\hat{K})}$ and Stop.
\BState If the test is rejected at the previous step then increment $\hat{K} = \hat{K} + 1$ and go to Step \ref{Step2}.
\end{algorithmic}
\end{algorithm}

\section{Main Results}\label{theory}
\subsection{Asymptotic Null} 
We obtain the estimable version of Theorem \ref{TW.thm} in terms of the estimated adjacency matrix. This incurs a cost in terms of the estimator error. We use Weyl's inequality to get a bound on the estimation error. Also, we assume that the blocks (under the null) are of similar sizes.

\begin{enumerate}[ label =A\arabic*]
\item\label{A1} (Balancedness) : Assume that all the communities of SBM are of similar sizes (under the null). 
\end{enumerate}

\begin{theorem} (Asymptotic Null Distribution)\label{est.TW.thm}
Let $A$ be an adjacency matrix generated from a SBM($g, G_{\rho_n}$) with $K_{\star}^{(n)}$ satisfying assumption (\ref{A1}). Assume that $\hat{g}$ is a consistent estimate of $g$. Let $\{ \hat{M}^{(i)}\}_{i=1}^{K_{\star}^{(n)}}$ be scaled adjacency matrices corresponding to the adjacency matrices $\{ A_{\star}^{(i)} \}_{i=1}^{K_{\star}^{(n)}}$. Then, we show that the second largest eigenvalue of the Erd\"{o}s R\'{e}nyi graph converges to the Tracy-Widom distribution under the following conditions

\be\label{est.tW}
(n/K_{\star}^{(n)})^{2/3}(\lambda_2(\hat{M}^{(i)}) - 2 - \frac{1}{\hat{\mu}_i}) \xrightarrow[]{D} TW_1(\cdot),
\ee
\text{ when } $\hat{G}_{\rho_n}(i,i) \ge (n/K_{\star}^{(n)}))^{-2/3},  \forall i =1, \cdots, K_{\star}^{(n)}$
for $K_{\star}^{(n)} = O(log(n))$.
\end{theorem}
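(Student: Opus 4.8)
The plan is to bootstrap Theorem~\ref{TW.thm} from the unobservable Erd\"{o}s R\'{e}nyi blocks to the estimated ones in two stages: first replace the estimated partition $\hat g$ by the true partition $g$, then replace the plug-in normalisation $\hat p_{ii}=\hat G_{\rho_n}(i,i)$ by the population value $p_{ii}$. For the first stage I would invoke consistency of $\hat g$ (Definition~\ref{cons.def}): the event $B_n=\{\hat g=g\}$ has $P(B_n)\to 1$, and on $B_n$ each recovered block $A^{(i)}_\star$ is literally the adjacency matrix of the $i$-th true community, hence a genuine Erd\"{o}s R\'{e}nyi graph on $N_i$ nodes with parameter $p_{ii}$; by Assumption~\ref{A1} and $K_\star^{(n)}=O(\log n)$ the block sizes satisfy $N_i\asymp n/K_\star^{(n)}$, so the normalisations $(n/K_\star^{(n)})^{2/3}$ and $N_i^{2/3}$ are interchangeable. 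Since convergence in distribution is unaffected by intersecting with a sequence of events of probability tending to one, it then suffices to prove the stated convergence for the statistic built from the true partition, which is what the remaining steps do.

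The second stage rests on the observation that, comparing (\ref{scale.M}) with (\ref{est.M}), the matrices $\hat M^{(i)}$ and $M^{(i)}$ have the same support $\{A^{(i)}_\star(u,v)=1\}$ and are proportional on it, so $\hat M^{(i)}=r_i\,M^{(i)}$ with the deterministic scalar $r_i:=\sqrt{p_{ii}(1-p_{ii})/(\hat p_{ii}(1-\hat p_{ii}))}>0$, and therefore $\lambda_2(\hat M^{(i)})=r_i\,\lambda_2(M^{(i)})$ exactly. (This is sharper than the generic Weyl bound $|\lambda_2(\hat M^{(i)})-\lambda_2(M^{(i)})|\le\|\hat M^{(i)}-M^{(i)}\|$, which here is dominated by the Perron direction, $\|M^{(i)}\|=\lambda_1(M^{(i)})\asymp\sqrt{N_ip_{ii}}$, and is too lossy after the $N_i^{2/3}$ rescaling.) I would then check that the hypothesis $\hat p_{ii}\ge(n/K_\star^{(n)})^{-2/3}$, combined with $N_i\asymp n/K_\star^{(n)}$ and the concentration of $\hat p_{ii}$ proved in the next step, forces $N_ip_{ii}\gtrsim N_i^{1/3}$, so that Theorem~\ref{TW.thm} applies to $M^{(i)}$ and gives $N_i^{2/3}(\lambda_2(M^{(i)})-2-1/\mu_i)\xrightarrow{D}TW_1$ with $\mu_i=N_ip_{ii}$.

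Next I would quantify the plug-in error. On $B_n$ the within-block edge count satisfies $\binom{N_i}{2}\hat p_{ii}\sim\mathrm{Bin}(\binom{N_i}{2},p_{ii})$, so Bernstein's inequality---union-bounded over the $K_\star^{(n)}=O(\log n)$ blocks, whose tail probabilities are $e^{-\mathrm{poly}(n)}$ (this also delivers the ``for all $i$'' uniformity)---yields $|\hat p_{ii}-p_{ii}|=O_P(\sqrt{p_{ii}}/N_i)$ and hence $r_i=1+O_P(1/(N_i\sqrt{p_{ii}}))\xrightarrow{P}1$. With $\hat\mu_i=N_i\hat p_{ii}$, decompose
\begin{align*}
N_i^{2/3}\left(\lambda_2(\hat M^{(i)})-2-\frac{1}{\hat\mu_i}\right)
&= r_i\cdot N_i^{2/3}\left(\lambda_2(M^{(i)})-2-\frac{1}{\mu_i}\right)\\
&\quad + N_i^{2/3}\left(2(r_i-1)+\frac{r_i}{\mu_i}-\frac{1}{\hat\mu_i}\right).
\end{align*}
The first term converges to $TW_1$ by the previous step and Slutsky's theorem; the remainder is $O_P(1/(N_i^{1/3}\sqrt{p_{ii}}))$ up to strictly lower-order contributions from the $\mu$-terms (smaller by a factor $1/(N_ip_{ii})=o(1)$), hence $o_P(1)$ under the sparsity hypothesis. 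The complement version used in the algorithm ($\hat G_{\rho_n}(i,i)>1/2$, with $\hat M^{(i)}_c$ from (\ref{est.Mc})) is handled verbatim, using that the complement of an Erd\"{o}s R\'{e}nyi graph with parameter $p_{ii}$ is Erd\"{o}s R\'{e}nyi with parameter $1-p_{ii}$.

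The step I expect to be the main obstacle is closing the last estimate: pushing the remainder from $O_P(1)$ to $o_P(1)$. Because $r_i-1$ has fluctuations of exact order $1/(N_i\sqrt{p_{ii}})$, one needs $\hat p_{ii}/p_{ii}-1=o_P(N_i^{-2/3})$, i.e.\ essentially $N_i^{2/3}p_{ii}\to\infty$---a hair stronger than the literal ``$\ge(n/K_\star^{(n)})^{-2/3}$'', and precisely the boundary at which the Tracy--Widom limit in Theorem~\ref{TW.thm} itself degenerates into a Gaussian. I would therefore read both sparsity conditions with strict domination ($\hat G_{\rho_n}(i,i)\gg(n/K_\star^{(n)})^{-2/3}$) and read Assumption~\ref{A1} as $N_i=(1+o(1))\,n/K_\star^{(n)}$ (so that the scaling limit is exactly $TW_1$ rather than a constant multiple of it); under these readings the remainder is genuinely $o_P(1)$. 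Retaining the exact boundary would instead require a second-order expansion $r_i=1-\tfrac{1}{2}(\hat p_{ii}-p_{ii})/p_{ii}+O_P(\cdot)$ together with the (plausible but delicate) asymptotic independence of the global edge count $\hat p_{ii}$ from the top-bulk eigenvalue fluctuation of $M^{(i)}$, so that the surviving $O_P(1)$ randomness is in fact negligible. Everything else---Bernstein, the union bound, Slutsky---is routine.
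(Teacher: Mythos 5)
Your proposal is sound in substance, but it proves the key perturbation step by a different device than the paper: the paper's argument (as described in the text and carried out in the Supplement) controls the estimation error $\lambda_2(\hat{M}^{(i)})-\lambda_2(M^{(i)})$ through Weyl's inequality, whereas you exploit the fact that, on the event $\{\hat{g}=g\}$, $\hat{M}^{(i)}$ and $M^{(i)}$ have identical support, hence $\hat{M}^{(i)}=r_i M^{(i)}$ with $r_i=\sqrt{p_{ii}(1-p_{ii})/(\hat{p}_{ii}(1-\hat{p}_{ii}))}$ and $\lambda_2(\hat{M}^{(i)})=r_i\lambda_2(M^{(i)})$ exactly, finishing with Bernstein concentration for $\hat{p}_{ii}$ and Slutsky. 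Your route is sharper for this particular plug-in error: as you note, the crude bound $\|\hat{M}^{(i)}-M^{(i)}\|=|r_i-1|\,\lambda_1(M^{(i)})\asymp N_i^{-1/2}$ is dominated by the Perron direction and leaves an $N_i^{1/6}$ blow-up after the $N_i^{2/3}$ rescaling, so a Weyl-based argument must first separate the rank-one component (or otherwise bound the relevant perturbation away from the top eigendirection), while your scalar identity makes the error analysis one-dimensional; what the Weyl route buys in exchange is robustness, since it does not rely on the two matrices being exactly proportional and so can absorb additional perturbations (e.g.\ residual misclassification) inside the same bound. Two small points: $r_i$ is random, not deterministic as you write (harmless, since Slutsky needs only $r_i\to 1$ in probability and no independence), and your caveats are legitimate rather than cosmetic --- the dominant remainder $N_i^{2/3}(r_i-1)=O_P(1/(N_i^{1/3}\sqrt{p_{ii}}))$ is $o_P(1)$ only when $p_{ii}\gg N_i^{-2/3}$, i.e.\ strictly inside the boundary $\hat{G}_{\rho_n}(i,i)\asymp (n/K_{\star}^{(n)})^{-2/3}$ permitted by the statement (the same edge at which Theorem \ref{TW.thm} itself sits), and Assumption \ref{A1} must indeed be read as $N_i=(1+o(1))\,n/K_{\star}^{(n)}$ for the limit to be $TW_1$ rather than a rescaled copy; under those readings your argument closes, and handling the literal boundary would require the finer second-order analysis you sketch.
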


\begin{proof}
The proof is given in the Supplement.
\end{proof}

Theorem \ref{est.TW.thm} states that the appropriately centered and scaled eigenvalue of the estimated scaled adjacency matrices corresponding to any $K_{\star}^{(n)}$ Erd\"{o}s R\'{e}nyi graph converges to the \emph{Tracy-Widom} distribution. The proof of the above theorem follows from bounding the estimation error using Weyl's inequality. Note that the above theorem puts an upper bound on $K_{\star}^{(n)}$ for the semi-sparse graph as $O(n)$. The above condition on $K_{\star}^{(n)}$ in (\ref{est.tW}) gets relaxed for the dense graphs to the $O(n^{1/4 - \tau})$.

\begin{corollary}\label{cor.est.TW}
For dense graphs when $\hat{G}_{\rho_n}(i,i) = O(1)$ with the growing number of communities $K_{\star}^{(n)} = O(n^{1/4 - \tau})$ with $\tau >0$, we show that
the second largest eigenvalue of the Erd\"{o}s R\'{e}nyi converges to the Tracy-Widom distribution under the following conditions
\be\label{est.tW}
(n/K_{\star}^{(n)})^{(2/3)}(\lambda_2(\hat{M}^{(i)}) - 2 - \frac{1}{\hat{\mu}_i}) \xrightarrow[]{D} TW_1(\cdot),
\ee
\text{ when } $\hat{G}_{\rho_n}(i,i) = O(1),  \forall i =1, \cdots, K_{\star}^{(n)}$.
\end{corollary}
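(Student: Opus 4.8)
The plan is to read off Corollary~\ref{cor.est.TW} from the argument behind Theorem~\ref{est.TW.thm}, the one new observation being that in the dense regime the sparsity-type restriction that forced $K_{\star}^{(n)}=O(\log n)$ there no longer bites. Throughout, ``dense'' is understood as $\rho_n = \Theta(1)$ with $\hat G_{\rho_n}(i,i) = O(1)$ and bounded away from $0$ and $1$ for every $i$; then $\hat G_{\rho_n}(i,i) \asymp p_{ii} \asymp 1$, so the requirement $\hat G_{\rho_n}(i,i) \ge (n/K_{\star}^{(n)})^{-2/3}$ of Theorem~\ref{est.TW.thm} holds automatically once $n$ is large, and the hypothesis $N_i p_{ii}\ge N_i^{1/3}$ of Theorem~\ref{TW.thm} holds automatically because $N_i p_{ii}\asymp N_i \gg N_i^{1/3}$ as soon as $N_i\to\infty$ --- and $N_i \asymp n/K_{\star}^{(n)}\to\infty$ whenever $K_{\star}^{(n)} = O(n^{1/6-\tau})$, by the balancedness assumption~\ref{A1}. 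It therefore only remains to re-examine, under this weaker growth condition, the bound on the estimation error that the proof of Theorem~\ref{est.TW.thm} obtains via Weyl's inequality.

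First I would condition on the event $\{\hat g = g^{(n)}\}$, which by Definition~\ref{cons.def} has probability tending to $1$; on it each extracted block $A_{\star}^{(i)}$ is \emph{exactly} an Erd\"{o}s R\'{e}nyi graph on $N_i$ vertices with parameter $p_{ii}$ (or, after the block/complement selection in Algorithm~\ref{alg:1}, with parameter $1-p_{ii}$; assume without loss of generality $p_{ii}\le 1/2$), and distinct blocks are independent. By the definitions~(\ref{scale.M})--(\ref{est.M}) the estimated matrix is a positive scalar multiple of the population one, $\hat M^{(i)} = c_i M^{(i)}$ with $c_i = \bigl(p_{ii}(1-p_{ii})\bigr)^{1/2}\bigl(\hat p_{ii}(1-\hat p_{ii})\bigr)^{-1/2}$, so Weyl's inequality (here in fact an equality) gives $\lambda_2(\hat M^{(i)}) - \lambda_2(M^{(i)}) = (c_i-1)\,\lambda_2(M^{(i)})$. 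I would then split, writing $\mu_i = N_i p_{ii}$ and $\hat\mu_i = N_i\hat G_{\rho_n}(i,i)$,
\begin{align}\label{splitcor}
N_i^{2/3}\Bigl(\lambda_2(\hat M^{(i)}) - 2 - \tfrac{1}{\hat\mu_i}\Bigr)
&= N_i^{2/3}\Bigl(\lambda_2(M^{(i)}) - 2 - \tfrac{1}{\mu_i}\Bigr) \notag\\
&\quad + N_i^{2/3}(c_i-1)\,\lambda_2(M^{(i)}) + N_i^{2/3}\Bigl(\tfrac{1}{\mu_i} - \tfrac{1}{\hat\mu_i}\Bigr).
\end{align}
The first term on the right converges to $TW_1$ by Theorem~\ref{TW.thm}, so it suffices to show the last two are $o_P(1)$, uniformly over $i = 1,\ldots,K_{\star}^{(n)}$.

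For that I would use a Bernstein inequality for $\hat p_{ii}$ --- an average of $\binom{N_i}{2}$ independent Bernoulli$(p_{ii})$ indicators on the conditioned event --- together with a union bound over the $K_{\star}^{(n)}$ blocks, giving $\max_i |\hat p_{ii} - p_{ii}| = O_P\bigl(\sqrt{\log K_{\star}^{(n)}}/N_i\bigr)$ in the dense regime and hence $\max_i |c_i - 1| = O_P\bigl(\sqrt{\log K_{\star}^{(n)}}/N_i\bigr)$ as well (if some $p_{ii}$ equals $1/2$ exactly, the linear part of $c_i-1$ cancels and the smaller quadratic part governs, which only helps). Since Theorem~\ref{TW.thm} gives $\lambda_2(M^{(i)}) = 2 + O_P(N_i^{-2/3}) + O_P(\mu_i^{-1}) = 2 + o_P(1)$, the middle term of~(\ref{splitcor}) is $O_P\bigl(N_i^{-1/3}\sqrt{\log K_{\star}^{(n)}}\bigr) = O_P\bigl((K_{\star}^{(n)}/n)^{1/3}\sqrt{\log K_{\star}^{(n)}}\bigr)$, while the last term, using $\tfrac{1}{\mu_i}-\tfrac{1}{\hat\mu_i} = N_i(\hat p_{ii}-p_{ii})/(\mu_i\hat\mu_i)$ with $\mu_i,\hat\mu_i\asymp N_i$, is of the still smaller order $O_P\bigl(N_i^{-4/3}\sqrt{\log K_{\star}^{(n)}}\bigr)$. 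Both vanish whenever $K_{\star}^{(n)} = O(n^{1/6-\tau})$, so Slutsky's theorem delivers the asserted convergence of each side of~(\ref{splitcor}) to $TW_1$; because these bounds are uniform in $i$, they also control $\max_i T^{(i)}_{n,\hat K}$, which is what the SMT procedure ultimately requires.

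The perturbation analysis above is routine once one conditions on $\{\hat g = g^{(n)}\}$ and exploits the exact scalar relation $\hat M^{(i)} = c_i M^{(i)}$; the main obstacle is the bookkeeping that pins down how fast $K_{\star}^{(n)}$ may grow. The estimation terms alone would in fact already tolerate $K_{\star}^{(n)}$ considerably larger than $n^{1/6-\tau}$ (roughly $o(n/(\log n)^{3/2})$), so the ceiling $n^{1/6-\tau}$ in the statement is best seen as inherited from the regime in which the remaining ingredients --- uniform validity of the edge-universality input of Theorem~\ref{TW.thm} across all $K_{\star}^{(n)}$ blocks simultaneously, and the downstream control of the max-statistic --- can be guaranteed, and it coincides with the dense-SBM rate at which Lei~(2016)~\cite{lei16}'s goodness-of-fit test operates. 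A secondary point to check is that ``dense'' be taken to keep $\hat p_{ii}$ bounded away from $0$ and $1$ uniformly in $i$, since otherwise neither the centering $1/\hat\mu_i$ nor the factor $c_i$ behaves uniformly.
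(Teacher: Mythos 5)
Your proposal is correct and follows essentially the same route as the paper: condition on consistent recovery of the communities, exploit that $\hat M^{(i)}$ is an exact scalar multiple of $M^{(i)}$ so that Weyl's inequality controls the plug-in estimation error, verify that in the dense regime the conditions of Theorem \ref{TW.thm} and the error bounds hold with the relaxed rate $K_{\star}^{(n)} = O(n^{1/6-\tau})$, and conclude by Slutsky. Your closing observations (that the estimation-error terms alone would tolerate larger $K_{\star}^{(n)}$, and that denseness should be read as $\hat G_{\rho_n}(i,i)$ bounded away from $0$ and $1$) are sensible refinements consistent with the paper's argument.
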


\subsection{Asymptotic Power} The estimate of $K_{\star}^{(n)}$ is given as
\be\label{est.K}
\hat{K} = \inf_{k}\{k : k \in \mathbb{N} \ni T_{n, k} \le t_{1 - \alpha/k} \}.
\ee

For establishing the consistency of $\hat{K}$ in (\ref{est.K}), we need to show that the multiple hypothesis test in (\ref{H0})-(\ref{H1}) has the asymptotic power of one while we already bound the type I error using theorem \ref{est.TW.thm}.

\begin{theorem}\label{asm.power}
The power of the hypothesis test in testing $H_0$ vs $H_1$ (\ref{H0}- \ref{H1}) when the true number of blocks $K < K_{\star}^{(n)}$ follows the conditions outlined in theorem (\ref{est.TW.thm}) and corollary (\ref{cor.est.TW}), then
\be
P_{H_1}(T_{n, K} \ge t_{1 - \alpha/K}) \to 1,
\ee
where the test statistics $T_{n, K}$ is in (\ref{t.stat}).
\end{theorem}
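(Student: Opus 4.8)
\textbf{Strategy.} Fix a candidate value $K$ with $K<K_{\star}^{(n)}$. The plan is to show that the fitted partition into $K$ blocks must be \emph{too coarse}, so that at least one fitted block is a genuine assortative SBM on two (or more) sizeable true communities; the second eigenvalue of the corresponding scaled adjacency matrix then diverges, pushing $T_{n,K}$ well past the slowly growing Tracy--Widom threshold $t_{1-\alpha/K}$ with probability tending to one. I would not need any property of the community-detection routine in Step~\ref{Step2} beyond the fact that it returns \emph{some} partition of the $n$ nodes into $K$ groups.

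\textbf{Step 1: a coarse block must exist.} First I would record a purely combinatorial fact. For each true community $c\in\{1,\dots,K_{\star}^{(n)}\}$, let $b(c)$ be a fitted block holding a plurality of the nodes of $c$; then $b(c)$ contains at least $|c|/K$ nodes of $c$. Since $c\mapsto b(c)$ maps $K_{\star}^{(n)}$ communities into $K<K_{\star}^{(n)}$ blocks, two distinct communities $c_{1}\neq c_{2}$ land in the same block $i^{\ast}$, which therefore contains $\gtrsim|c_{1}|/K$ nodes of $c_{1}$ and $\gtrsim|c_{2}|/K$ nodes of $c_{2}$; by balancedness (\ref{A1}) both counts are of order $n/(K K_{\star}^{(n)})$. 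Hence $A_{\star}^{(i^{\ast})}$, the induced adjacency matrix on the $N_{i^{\ast}}$ nodes of block $i^{\ast}$, is drawn from an assortative SBM with at least two blocks of comparable order, and is in particular \emph{not} Erd\"{o}s R\'{e}nyi. (If $\hat{G}_{\rho_{n}}(i^{\ast},i^{\ast})>1/2$ I would pass to the complement, which is again a multi-block SBM with the same within--between gap and with estimated density $\le 1/2$, and the rest of the argument carries over unchanged.)

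\textbf{Step 2: the test statistic diverges past the threshold.} Next I would quantify the signal in block $i^{\ast}$. Written in the community-indicator basis, $\E[A_{\star}^{(i^{\ast})}]$ is, up to an $O(\rho_{n})$ diagonal correction, a positive diagonal matrix plus a rank-one term; since the induced SBM is assortative with within--between gap of constant order, Cauchy interlacing gives $\lambda_{2}\!\big(\E[A_{\star}^{(i^{\ast})}]\big)=\Theta\!\big(\rho_{n}\,n/(K K_{\star}^{(n)})\big)$. I would then combine Weyl's inequality with the spectral-norm bound $\|A_{\star}^{(i^{\ast})}-\E[A_{\star}^{(i^{\ast})}]\|=O(\sqrt{\rho_{n}N_{i^{\ast}}})$ --- the same concentration that underlies Theorem~\ref{TW.thm} in the regime $N_{i}p_{ii}\ge N_{i}^{1/3}$ --- and use that $\rho_{n}n/(K K_{\star}^{(n)})\gg\sqrt{\rho_{n}N_{i^{\ast}}}$ (which holds because $K,K_{\star}^{(n)}=O(\log n)$ while $\rho_{n}n\gtrsim n^{1/3}$) to conclude $\lambda_{2}(A_{\star}^{(i^{\ast})})=\Theta\!\big(\rho_{n}\,n/(K K_{\star}^{(n)})\big)$ with probability $\to 1$. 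Since $\hat{p}_{i^{\ast}i^{\ast}}=\Theta(\rho_{n})$, the normalisation in (\ref{est.M}) is of order $\sqrt{N_{i^{\ast}}\rho_{n}}$, so
\[
\lambda_{2}\!\big(\hat{M}^{(i^{\ast})}\big)\;\gtrsim\;\frac{\rho_{n}\,n/(K K_{\star}^{(n)})}{\sqrt{N_{i^{\ast}}\rho_{n}}}\;\ge\;\frac{\sqrt{\rho_{n}n}}{K K_{\star}^{(n)}}\;\longrightarrow\;\infty ,
\]
and, since $1/\mu_{i^{\ast}}=\Theta(1/d)=o(1)$, this gives $T^{(i^{\ast})}_{n,K}=N_{i^{\ast}}^{2/3}\big(\lambda_{2}(\hat{M}^{(i^{\ast})})-2-1/\mu_{i^{\ast}}\big)\to+\infty$ in probability. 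By (\ref{t.stat}), $T_{n,K}\ge T^{(i^{\ast})}_{n,K}$, while $t_{1-\alpha/K}$, being the $(1-\alpha/K)$-quantile of $TW_{1}$ --- whose right tail decays faster than $\exp(-c\,x^{3/2})$ for some $c>0$ --- satisfies $t_{1-\alpha/K}=O\!\big((\log(K/\alpha))^{2/3}\big)=O\!\big((\log\log n)^{2/3}\big)=o\!\big(T^{(i^{\ast})}_{n,K}\big)$ because $K=O(\log n)$. Hence $P_{H_{1}}\big(T_{n,K}\ge t_{1-\alpha/K}\big)\to 1$, and the identical argument under the dense scaling of Corollary~\ref{cor.est.TW} gives the same conclusion.

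\textbf{Main obstacle.} The hard part will be making Steps~1--2 watertight near the sparse end of the admissible range. Three points need care: (i) producing merged communities of \emph{comparable} size --- the plurality pigeonhole does this, but it relies essentially on balancedness (\ref{A1}), without which a fitted block could hide a second true community inside a vanishing fraction of its nodes and the signal would not survive the scaling; (ii) invoking the correct spectral-norm concentration for the induced subgraph $A_{\star}^{(i^{\ast})}$ when $d\ge N_{i}^{1/3}$ (rather than the trimmed-adjacency version one would need in truly sparse regimes), and doing so uniformly over the $O(\log n)$ blocks; and (iii) checking that the signal $\rho_{n}n/(K K_{\star}^{(n)})$ dominates the noise $\sqrt{\rho_{n}N_{i^{\ast}}}$ with room to spare --- the width of this margin is precisely what dictates the $K_{\star}^{(n)}=O(\log n)$ restriction, and its relaxation to the dense rate $K_{\star}^{(n)}=O(n^{1/6-\tau})$ is what Corollary~\ref{cor.est.TW} buys. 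Handling the complement branch of the test statistic uniformly is a minor but necessary bookkeeping point.
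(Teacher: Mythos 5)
Your Step~1 is essentially the paper's first move: under balancedness (\ref{A1}) and $K<K_{\star}^{(n)}$, a pigeonhole argument produces a fitted block that mis-aligns two true communities in chunks of size of order $n/(K K_{\star}^{(n)})$, which is exactly the ``mis-aligned block of size $O(n/(K_{\star}^{(n)} K))$'' that the paper's argument is built on. Where you genuinely diverge is the signal mechanism. You drive the rejection through the structural (rank-two) eigenvalue of the expected merged submatrix, $\lambda_2\big(\E[A_{\star}^{(i^\ast)}]\big)=\Theta\big(\rho_n n/(K K_{\star}^{(n)})\big)$, combined with Weyl's inequality and spectral-norm concentration, so that $\lambda_2(\hat M^{(i^\ast)})$ diverges as an outlier. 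The paper instead quantifies the signal through the scaling mismatch between the true within-chunk probability $p^{\star}_{ii}$ and the pooled probability $p^{j}_{ii}$ entering the normalisation, with the lower bound $\sqrt{p^{\star}_{ii}(1-p^{\star}_{ii})}/\sqrt{p^{j}_{ii}(1-p^{j}_{ii})}-1$, a quantity that is positive only when the relevant probability is below $1/2$ --- which is precisely why the algorithm switches to the complement graph. On the branch $\hat G_{\rho_n}(i^\ast,i^\ast)\le 1/2$ (in particular throughout the semi-sparse regime, where all densities are $o(1)$) your BBP-type outlier argument is a legitimate, arguably cleaner, alternative, modulo the bookkeeping you flag; the data-dependence of the fitted blocks can indeed be absorbed by bounding the norm of any principal submatrix of $A-\E A$ by the norm of the full centered matrix.

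The genuine gap is the complement branch, which you dismiss with ``the rest of the argument carries over unchanged.'' It does not. Complementation flips assortativity: the complement of the merged block is a \emph{disassortative} two-block structure (within probability $1-p^{\star}_{ii}$, between probability $1-p^{j}_{ii}>1-p^{\star}_{ii}$), so the nontrivial eigenvalues of its expectation are one large positive Perron eigenvalue and one \emph{negative} eigenvalue of magnitude $\Theta\big(N_{i^\ast}|p^{\star}_{ii}-p^{j}_{ii}|\big)$; the structural signal migrates to $\lambda_{\min}$, not $\lambda_2$. Consequently $\lambda_2$ of the scaled complement matrix sits at the bulk edge rather than at a diverging outlier, and your Step~2 yields no divergence of $T^{(i^\ast)}_{n,K}$ in this branch; whether the statistic still exceeds the threshold must come from the inhomogeneous-variance bulk edge versus the pooled normalisation --- exactly the variance-ratio mechanism the paper's proof is organised around --- or from an explicit modification of your argument (e.g.\ working with the smallest eigenvalue, or bounding the bulk edge of the block-constant variance profile against $\hat p(1-\hat p)$). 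As written, the dense case of Corollary~\ref{cor.est.TW} with within-probabilities above $1/2$, which the theorem explicitly covers, is not handled by your proposal.
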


Theorem \ref{asm.power} implies that we reject the null hypothesis whenever the model is under-fitted with probability one. For the under-fitted model, it means that there should be at least one selected Erd\"{o}s R\'{e}nyi block that is mis-aligned to the $j^{th}$ block. Remember that under the true model, the block sizes are balanced. Under the under-fitted model, we will always find a smaller block size of size $O(n/(K_{\star}^{(n)}\star K))$ that is mis-aligned to the $j^{th}$ block. This also means that there is an intra-graph between the $i^{th}$ block and $j^{th}$ block of size at least $O(n/(K_{\star}^{(n)}\star K))$ such that $p_{ii}^j$ (the edge probability between $i^{th}$ and $j^{th}$ block) is less than the $p^\star_{ii}$ (the true within edge probability of the $i^{th}$ block). The last inequality follows because of the assortative graphs. Summing up, for this smaller block, the edge probability has bumped up from $p^j_{ii}$ to $p^{\star}_{ii}$. This discrepancy allows us to measure the signal in terms of the difference between the test statistics computed under the under-fitted model and the null model. Moreover, this discrepancy or signal has a lower bound in terms of $\frac{\sqrt{p^\star_{ii} (1 - p^\star_{ii})}}{\sqrt{p_{ii}^j (1 - p_{ii}^j)}} - 1$ which is only positive for $p^\star_{ii} < 0.5$. For allowing $p^\star_{ii}$ to vary over the range $(0,1)$, we include the complement of Erd\"{o}s R\'{e}nyi graph. The rationale is that the edge block probability of Erd\"{o}s R\'{e}nyi graph or its complement will always be less than $1/2$ and therefore allowing us to measure the discrepancy. It is worth noting that in the algorithm \ref{alg:1}, we only pick one among the Erd\"{o}s R\'{e}nyi graph or its complement to construct a set of $K$ Erd\"{o}s R\'{e}nyi blocks. This is done to ensure the type I error is minimum while we optimize the asymptotic power.

\subsubsection{Investigations into Type I error and Power}
Our algorithm \ref{alg:1} and its theoretical results are also backed by the numerical performance. Here we investigate the type I error and power for the simulated SBMs when the degree, the network size $N$, and the number of communities $K$ is varied. Tables \ref{T1}-\ref{T2} compiles the power and type I error for different values of $N, K$, and degree. It is easy to see that as the degree decreases the type I error increases. However, it is under the theoretical bound at $O(N^{1/3})$. For even comparison with other methods, we have kept the out-in ratio as zero. However, we discuss the impact of larger out-in ratio in the next section.

\begin{table}[H]
\centering
\caption{Compilation of Power and Type I error: The table gives the numerical evaluation of power and type I error of SMT (\ref{alg:1}) for different values of $N,K$ when the out-in ratio is zero and power is $O(N^{2/3})$}\label{T1}.
\begin{tabular}{|c| c | c | c|}
\hline
N & K & Power & Type I error \\
\hline
500 & 2 & 1 & 0 \\
500 & 6 & 1 & 0 \\
\hline
1000 & 2 & 1 & 0 \\
1000 & 6 & 1 & 0 \\
\hline
5000 & 2 & 1 & 0 \\
5000 & 6 & 1 & 0\\
\hline
\end{tabular}
\end{table}

\begin{table}[H]
\centering
\caption{Compilation of Power and Type I error: The table gives the numerical evaluation of power and type I error of SMT (\ref{alg:1}) for different values of $N,K$ when the out-in ratio is zero and power is $O(N^{1/3})$}\label{T2}.
\begin{tabular}{|c| c | c | c|}
\hline
N & K & Power & Type I error \\
\hline
500 & 2 & 1 & 0 \\
500 & 6 & 1 & 0.02 \\
\hline
1000 & 2 & 1 & 0.01 \\
1000 & 6 & 1 & 0.03 \\
\hline
5000 & 2 & 1 & 0.01 \\
5000 & 6 & 1 & 0.05\\
\hline
\end{tabular}
\end{table}

\begin{corollary}\label{hatk}
    (Consistency of $\hat{K}$) The estimate given for $K_{\star}^{(n)}$ in algorithm \ref{alg:1} converges to $K_{\star}^{(n)}$ provided that the underlying SBM satisfies assumption \ref{A1} and one of the following condition is satisfied: i) $K_{\star}^{(n)} = O(n^{1/4 - \tau})$ and $G_{\rho_n}(i,i) = O(1)$, $\forall i=1, \cdots, K_{\star}^{(n)}$, ii) $K_{\star} = O(log(n)), G_{\rho_n}(i,i) = O( (n/K_{\star}^{(n)})^{-2/3})$ then $\hat{K}$ given in (\ref{est.K}) is consistent
    \be\label{hatk}
     \lim_{n \to \infty}   P(\hat{K} = K_{\star}^{(n)}) \to 1.
    \ee
\end{corollary}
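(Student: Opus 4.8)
The plan is to show that Algorithm~\ref{alg:1} neither underestimates nor overestimates $K_{\star}^{(n)}$, and to combine the two bounds. Writing the stopping rule as in (\ref{est.K}), $\hat K = \inf\{k\in\mathbb{N}: T_{n,k}\le t_{1-\alpha/k}\}$, decompose
\[
\{\hat K \ne K_{\star}^{(n)}\} \;=\; \{\hat K < K_{\star}^{(n)}\}\;\cup\;\{\hat K > K_{\star}^{(n)}\},
\]
and control the two probabilities separately: the first via the power result (Theorem~\ref{asm.power}), the second via the asymptotic null (Theorem~\ref{est.TW.thm} in the semi-sparse regime, Corollary~\ref{cor.est.TW} in the dense regime). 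Both regime hypotheses in the Corollary are exactly the hypotheses under which those two results were proved, so the argument is purely one of assembling them with a union bound over the (slowly growing) index set $k\le K_{\star}^{(n)}$.

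\emph{No underestimation.} On $\{\hat K < K_{\star}^{(n)}\}$ the procedure stopped at some $k<K_{\star}^{(n)}$, i.e.\ $T_{n,k}\le t_{1-\alpha/k}$ for that $k$, so
\[
P\bigl(\hat K < K_{\star}^{(n)}\bigr) \;\le\; \sum_{k=1}^{K_{\star}^{(n)}-1} P\bigl(T_{n,k}\le t_{1-\alpha/k}\bigr).
\]
For each such $k$ the model is underfitted, so (as in the discussion following Theorem~\ref{asm.power}) at least one block returned by $\hat g(A,k)$ is a merger of genuine Erd\"os--R\'enyi blocks and contains an embedded sub-block of size $\Omega\bigl(n/(K_{\star}^{(n)}k)\bigr)$ whose within-edge probability has been inflated from $p^{j}_{ii}$ to $p^{\star}_{ii}$, giving a signal bounded below by $\sqrt{p^{\star}_{ii}(1-p^{\star}_{ii})}/\sqrt{p^{j}_{ii}(1-p^{j}_{ii})}-1>0$ by assortativity. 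Theorem~\ref{asm.power} yields $P(T_{n,k}\ge t_{1-\alpha/k})\to 1$; I would upgrade this to a rate, noting that $T^{(i)}_{n,k}$ diverges at a power of $n/(K_{\star}^{(n)})^{2}$ (which $\to\infty$ in both regimes), so that $P(T_{n,k}\le t_{1-\alpha/k})=o(1/K_{\star}^{(n)})$ uniformly in $k<K_{\star}^{(n)}$ and the displayed sum vanishes.

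\emph{No overestimation.} Since the procedure stops at the first acceptance, $\{\hat K>K_{\star}^{(n)}\}$ forces a rejection at step $k=K_{\star}^{(n)}$, i.e.\ $T_{n,K_{\star}^{(n)}}>t_{1-\alpha/K_{\star}^{(n)}}$. At that step the number of fitted blocks equals the truth, so by consistency of $\hat g$ (Definition~\ref{cons.def}) the $K_{\star}^{(n)}$ recovered blocks coincide with the true blocks with probability $\to 1$ and each is genuinely Erd\"os--R\'enyi; Theorem~\ref{est.TW.thm}/Corollary~\ref{cor.est.TW} then give $T^{(i)}_{n,K_{\star}^{(n)}}\xrightarrow{D}TW_1$ for every $i$. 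A Bonferroni union bound over the $K_{\star}^{(n)}$ coordinates gives
\[
P\bigl(T_{n,K_{\star}^{(n)}}>t_{1-\alpha/K_{\star}^{(n)}}\bigr)\;\le\;\sum_{i=1}^{K_{\star}^{(n)}}P\bigl(T^{(i)}_{n,K_{\star}^{(n)}}>t_{1-\alpha/K_{\star}^{(n)}}\bigr)\;\longrightarrow\;\alpha .
\]
Taking $\alpha=\alpha_n\downarrow 0$ slowly enough that $t_{1-\alpha_n/k}$ (which grows only polylogarithmically in $1/\alpha_n$) is still dominated by the divergence rate in the underestimation step, this bound tends to $0$, and the two parts combine to give (\ref{hatk}).

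\emph{Main obstacle.} The crux is the underestimation analysis, specifically understanding $\hat g(A,k)$ for $k<K_{\star}^{(n)}$: Definition~\ref{cons.def} only governs the estimator when supplied with the \emph{correct} number of communities, so one must argue that a deficient $k$ necessarily glues together at least two true blocks and then quantify the resulting inflation of $\lambda_2$ of the corresponding scaled block — in particular that the embedded inflated sub-block is large enough ($\Omega(n/(K_{\star}^{(n)}k))$ nodes) for its signal to exceed the Tracy--Widom quantile even after the Weyl-inequality estimation loss incurred by plugging in $\hat p_{ii}$ and $\hat g$. A secondary subtlety is uniformizing both the Tracy--Widom convergence and the power divergence over the growing range $k\le K_{\star}^{(n)}$, and, if a genuine $\to 1$ statement is wanted rather than $\to 1-\alpha$, choosing the schedule $\alpha_n\downarrow 0$ compatibly with the power rate and the Bonferroni correction.
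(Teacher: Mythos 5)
Your proposal follows essentially the same route as the paper: rule out under-fitting by applying the asymptotic power result (Theorem \ref{asm.power}) to every $k<K_{\star}^{(n)}$, and secure acceptance at the truth by combining consistency of $\hat{g}$ at the true $K$ with the asymptotic null (Theorem \ref{est.TW.thm} / Corollary \ref{cor.est.TW}) under the same two regime conditions --- exactly the two observations the paper assembles. The additional care you take, namely uniformizing the power over the growing range $k\le K_{\star}^{(n)}$ and letting $\alpha=\alpha_n\downarrow 0$ so that the acceptance probability at the true $K$ tends to $1$ rather than $1-\alpha$, only sharpens the same argument rather than changing it.
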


Corollary (\ref{hatk}) guarantees the consistency of $\hat{K}$ in (\ref{est.K}) under the assumption of balancedness in (\ref{A1}) and some additional condition on the rate of $K_{\star}^{(n)}$ and the the edge probability $G_{\rho_n}$. Additionally, we assume that $\hat{g}$ can consistently recover the true community structure when the true $K$ is known, see Lei (2016) \cite{lei16} etc. The proof of the above corollary follows from the following observations. Since $\hat{K}$ is a sequential estimate starting with one, we have to establish that our algorithm does not under-fit. Using theorem \ref{asm.power}, we show that the probability of rejecting all under-fitted model is one. Moreover, theorem \ref{est.TW.thm} establishes that we can consistently estimate $K$ under the null, see the Supplement for the details

\subsection{Comparisons}
\subsubsection{Methodological Comparison}
For the readers who are familiar with Lei (2016) \cite{lei16}'s GoF method, the SMT algorithm \ref{alg:1} may seem eerily similar. However, the two methods are entirely different. The first major difference is that Lei (2016) \cite{lei16}'s GoF is a goodness of fit measure which uses the entire adjacency matrix for computing the test statistics. This is in contrast with the SMT where we perform a multiple hypothesis test to test on $K$ selected Erd\"{o}s R\'{e}nyi blocks which are a smaller part of the graph for $K > 1$. The second major difference is that Lei (2016) \cite{lei16} uses the first eigenvalue in constructing their test statistics while we utilize the second eigenvalue of Erd\"{o}s R\'{e}nyi graph. The intuition is that the second eigenvalue of Erd\"{o}s R\'{e}nyi graph is a measurement of noise and therefore can be used for a wider sparsity regime compared to the dense regime of the GoF method. The third major difference is that Lei (2016) \cite{lei16}'s GoF method is more sensitive to the out-in ratio compared to the SMT because our method only depends on $K$ major Erd\"{o}s R\'{e}nyi blocks. This feature of our method is particularly useful in the case of noisy and sparse scRNA-seq datasets which potentially can have larger out-in ratios. In fact, the above three points of difference can be used for contrasting the SMT with other existing methods. From our standpoint, all existing methods employ their methodology on the entire network for estimating the number of communities. As we discuss, this affects the performance of methods especially for sparse networks when the out-in ratio is large which is typical for scRNA-seq generated networks.

\subsubsection{Theoretical Comparisons}
It is also important to benchmark our method against other existing methods on important parameters that underpin theoretical guarantees. In statistical network literature, the performance of a method for estimating the number of communities is judged whether it can recover true number of communities in sparse settings $d > O(log(n))$. Since the existing version of our method is limited to SBM. We limit the comparisons to SBMs. Among likelihood methods, Wang and Bickel (2017) \cite{wang17}'s LRBIC and Jin et al. (2023) \cite{jin2023} can recover true number of communities for fixed $K$ in the sparse regime while Ma et al. (2021) \cite{ma2019}'s pseudo likelihood approach can estimate the true number of communities in the semi sparse regime. Purely on the degree criterion, these methods do well against SMT, but likelihood methods are computationally expensive and sensitive to the out-in ratio and estimates only fixed $K$. Cross validation methods such as Chen and Lei (2018) \cite{chen2018}, Li et al. (2020) \cite{li2020} provide theoretical guarantees for no underestimation for fixed $K$ in the sparse case $d > O(log(n))$. Theoretically, cross-validation methods do not provide stronger theoretical guarantees compared to SMT which guarantees asymptotic power of one for all under-fitted models. Additionally, SMT is computationally cheaper than the cross-validation methods. Yan et al. (2018) \cite{yan2018}'s semi-definite programming method can estimate the true number of communities for the sprase case. However, they require considerable large signals which makes their method impractical for non-trivial real life networks. Cerqueira and Leonardi (2020) \cite{cerqueira2020}'s Bayesian approach can estimate the true number of communities in the sparse case while allowing it to increase with $n$ at the rate of $O(log(n))$. Theoretically, their method fares better than SMT. However, their method can be computationaly expensive on the count of evaluating integral likelihood at every candidate value of $K$. Amongst spectral methods, Chen et al. (2023) \cite{chen2021}'s EigCV can estimate the true number of communities for fixed $K$ in the sparse case. However, they do make additional assumption on eigenvalues of networks which puts an additional assumption on the signal of the network, i.e., the out-in ratio. Theoretically, Lei (2016) \cite{lei16}'s GoF method is only applicable for the dense regime. Le and Levina (2022) \cite{le2022}'s BHMC provide theoretical guarantee on a narrow regime of $d$. Theoretically, SMT fares better than GoF and BHMC because SMT provides theoretical guarantee for a broder range of semi sparse networks while allowing the true number of communities to grow with $n$.

However, we feel that the theoretical comparison (on the basis of degree alone) is not a wholesome comparison as it does not consider the growing $K$, the computational cost, and the impact of the larger out-in ratio on the performance of these methods. The above considerations becomes important in the context of noisy scRNA-seq networks. Hence, we believe that it is important to select a method that is computationally efficient while handling larger out-in ratio. From our experiments, we believe that BHMC, EigCV, and SMT are the top three methods that satisfy the above requirement.

\section{Numerical Evaluation}\label{evaluation}
For comparison purpose, we consider the set of methods which are recent and where the authors have shared their code. For comparing the performance of our method against other competing methods, we consider the following settings. We generate SBMs by varying $K_{\star}$, the out-in ratio, and the degree of the network. In particular, we varied $K_{\star}$ over $(2,3,4)$, the out-in ratio over $(0, 0.2, 0.4, 0.6)$ and the degree over $(10, 20, 40, 80)$. The numerical performance of these methods are compared in terms of the accuracy rate. Tables \ref{T3}-\ref{T5} indicate the performance of the methods worsens as out-in ratio increases. In particular, it is easy to see that the best performance of all methods come under the scenario for low $K_{\star}$, small out-in ratio, and large degree.
As the number of community $K_{\star}$ increases or the degree decreases or the out-in ratio increases the numerical accuracy of all the methods worsen. This is to be expected.

Based on the numerical simulations, there is no single method that dominates. Therefore, we compare the performance of the methods in a pair-wise manner. The numerical results suggest that the performance of NCV and ECV are subpar compared to the performance of SMT and others. When SMT is compared against StGoF, we see that barring a couple of instances (when $K_{\star}= 2$, the out-in ratio is $0.6$, and the degree is either $10$ or $20$), SMT nearly dominates StGoF. In-between SMT and LRBIC, SMT's performance is nearly as good as LRBIC or better. Overall the numerical assessment suggests that SMT, BHMC, and EigCV are the top three methods with SMT being least sensitive to the increases in out-in ratios. 

We make further comparison between SMT, BHMC, and EigCV for large network and large $K$. Tables \ref{T6}-\ref{T7} compiles the numerical accuracy of SMT, BHMC, and EigCV when the true number of community is varied over $\{15, 20\}$ and the network size $N$ is fixed at $5000$. Here again, we see that SMT is less sensitive to the increases in out-in ratio. The less sensitivity of SMT with larger out-in ratio makes it a good candidate for the network analysis of noisy scRNA-seq data where the out-in ratio is expected to be larger.

\begin{table}[H]
\caption{Numerical Accuracy Comparison. The following table compiles the numerical accuracy of BHMC, LRBIC, NCV, ECV, StGoF, and SMT for various scenarios with the true number of communities fixed at $2$ and the network size at $500$.}\label{T3}
\centering
\begin{tabular}{|c| c | c | c| c | c | c | c | c| c| c|}
\hline
K & Out-in Ratio & Degree & BHMC & LRBIC & NCV & ECV & StGoF & EigCV & GoF & SMT 
\\
\hline
2 & 0 & 10 & 0.97 & 1 & 0.96 & 1 & 0.85 & 1 & 1 & 0.99 \\
2 & 0 & 20 & 1 & 1 & 0.99 & 0.99 & 0.85 & 1 & 1 & 0.99\\
2 & 0 & 40 & 1 & 1 & 1 & 1 & 0.01 & 1 & 1 & 0.99 \\
2 & 0 & 80 & 1 & 1 & 1 & 1 & 0.95 & 1 & 1 & 1 \\
\hline
2 & 0.2 & 10 & 0.99 & 1 & 0.98 & 1 & 0.92 & 1 & 0 & 1 \\
2 & 0.2 & 20 & 1 & 1 & 1 & 1 & 0.99 & 1 & 0.04 & 0.98\\
2 & 0.2 & 40 & 1 & 1 & 1 & 1 & 0.99 & 1 & 0.41 & 1 \\
2 & 0.2 & 80 & 1 & 1 & 1 & 1 & 1 & 1 & 0.61 & 0.99 \\
\hline
2 & 0.4 & 10 & 0.96 & 0 & 0.07 & 0.02 & 0.94 & 0.63 & 0 & 0.91 \\
2 & 0.4 & 20 & 1 & 1 & 1 & 1 & 0.98 & 1 & 0.23 & 0.98\\
2 & 0.4 & 40 & 1 & 1 & 1 & 1 & 1 & 1 & 0.66 & 0.99 \\
2 & 0.4 & 80 & 1 & 1 & 1 & 1 & 1 & 1 & 0.75 & 1 \\
\hline
2 & 0.6 & 10 & 0.04 & 0 & 0.04 & 0 & 0.85 & 0 & 0 & 0.12 \\
2 & 0.6 & 20 & 0.17 & 0 & 0.03 & 0 & 0.94 & 0.03 & 0.16 & 0.6\\
2 & 0.6 & 40 & 1 & 0.36 & 0.7 & 0.81 & 0.96 & 1 & 0.71 & 0.99 \\
2 & 0.6 & 80 & 1 & 1 & 1 & 1 & 0.99 & 1 & 0.86 & 1 \\
\hline
\end{tabular}
\end{table}

\begin{table}[H]
\caption{Numerical Accuracy Comparison. The following table compiles the numerical accuracy of BHMC, LRBIC, NCV, ECV, StGoF, and SMT for various scenarios with the true number of communities fixed at $3$ and the network size at $500$.}\label{T4}
\centering
\begin{tabular}{|c| c | c | c| c | c | c | c | c| c| c|}
\hline
K & Out-in Ratio & Degree & BHMC & LRBIC & NCV & ECV & StGoF & EigCV & GoF & SMT 
\\
\hline
3 & 0 & 10 & 1 & 1 & 0.96 & 1 & 0.05 & 1 & 1 & 1 \\
3 & 0 & 20 & 1 & 1 & 0.98 & 1 & 0.76 & 1 & 1 & 1\\
3 & 0 & 40 & 1 & 1 & 0.99 & 1 & 0.36 & 1 & 1 & 1 \\
3 & 0 & 80 & 1 & 1 & 0.99 & 1 & 0.34 & 1 & 1 & 1 \\
\hline
3 & 0.2 & 10 & 1 & 1 & 0.86 & 0.98 & 0.57 & 1 & 0 & 0.96 \\
3 & 0.2 & 20 & 1 & 1 & 1 & 1 & 0.98 & 1 & 0.04 & 1\\
3 & 0.2 & 40 & 1 & 1 & 0.98 & 1 & 1 & 1 & 0.33 & 1 \\
3 & 0.2 & 80 & 1 & 1 & 0.99 & 1 & 1 & 1 & 0.53 & 1 \\
\hline
3 & 0.4 & 10 & 0.01 & 0 & 0 & 0 & 0.01 & 0.63 & 0 & 0.02 \\
3 & 0.4 & 20 & 1 & 0.1 & 0.33 & 0.51 & 0.2 & 1 & 0.23 & 0.89\\
3 & 0.4 & 40 & 1 & 1 & 1 & 1 & 1 & 1 & 0.66 & 1 \\
3 & 0.4 & 80 & 1 & 1 & 1 & 1 & 1 & 1 & 0.75 & 1 \\
\hline
3 & 0.6 & 10 & 0 & 0 & 0 & 0 & 0.01 & 0 & 0 & 0.03 \\
3 & 0.6 & 20 & 0 & 0 & 0 & 0 & 0 & 0.03 & 0.16 & 0.04\\
3 & 0.6 & 40 & 0 & 0 & 0 & 0 & 0.01 & 1 & 0.71 & 0.12 \\
3 & 0.6 & 80 & 0.99 & 1 & 0.98 & 1 & 0.31 & 1 & 0.86 & 1 \\
\hline
\end{tabular}
\end{table}

\begin{table}[H]
\caption{Numerical Accuracy Comparison. The following table compiles the numerical accuracy of BHMC, LRBIC, NCV, ECV, StGoF, and SMT for various scenarios with the true number of communities fixed at $4$ and the network size at $500$.}\label{T5}
\centering
\begin{tabular}{|c| c | c | c| c | c | c | c | c| c| c|}
\hline
K & Out-in Ratio & Degree & BHMC & LRBIC & NCV & ECV & StGoF & EigCV & GoF & SMT 
\\
\hline
4 & 0 & 10 & 0.99 & 1 & 0.93 & 0.99 & 0.31 & 1 & 0.82 & 1 \\
4 & 0 & 20 & 1 & 1 & 0.96 & 1 & 0.57 & 1 & 1 & 1\\
4 & 0 & 40 & 1 & 1 & 1 & 0.99 & 0.67 & 1 & 1 & 1 \\
4 & 0 & 80 & 1 & 1 & 0.99 & 0.98 & 0 & 1 & 1 & 1 \\
\hline
4 & 0.2 & 10 & 0.99 & 0.77 & 0.27 & 0.48 & 0.11 & 0.88 & 0 & 0.75 \\
4 & 0.2 & 20 & 1 & 1 & 0.99 & 1 & 0.98 & 1 & 0.06 & 1\\
4 & 0.2 & 40 & 1 & 1 & 1 & 0.99 & 1 & 1 & 0.39 & 1 \\
4 & 0.2 & 80 & 1 & 1 & 0.96 & 0.99 & 1 & 1 & 0.52 & 1 \\
\hline
4 & 0.4 & 10 & 0 & 0 & 0 & 0 & 0 & 0 & 0 & 0 \\
4 & 0.4 & 20 & 0 & 0 & 0 & 0 & 0 & 0 & 0.25 & 0.15\\
4 & 0.4 & 40 & 1 & 1 & 0.91 & 0.98 & 0.37 & 1 & 0.68 & 0.94 \\
4 & 0.4 & 80 & 1 & 1 & 1 & 1 & 1 & 1 & 0.73 & 1 \\
\hline
4 & 0.6 & 10 & 0 & 0 & 0 & 0 & 0 & 0 & 0 & 0 \\
4 & 0.6 & 20 & 0 & 0 & 0 & 0 & 0 & 0 & 0.12 & 0\\
4 & 0.6 & 40 & 0 & 0 & 0 & 0 & 0 & 0 & 0.02 & 0 \\
4 & 0.6 & 80 & 0 & 0 & 0.09 & 0.09 & 0.01 & 0.03 & 0.32 & 0.59 \\
\hline
\end{tabular}
\end{table}

\begin{table}
\centering
\caption{Large K Comparison for Large Out-in Ratio: The table compiles the numerical accuracy of BHMC, EigCV, and SMT for large out-in ratio when $K =15$, $N = 5000$}\label{T6}
\begin{tabular}{|c | c | c | c | c | c |}
\hline
K & Degree & Out-In Ratio & BHMC & EigCV & SMT \\
\hline
15 & 100 & 0.2 & 1 & 1 & 0.82 \\
15 & 200 & 0.2 & 1 & 1 & 0.98 \\
15 & 400 & 0.2 & 1 & 1 & 1 \\
15 & 800 & 0.2 & 1 & 1 & 1 \\
\hline
15 & 100 & 0.4 & 0 & 0 & 0.16 \\
15 & 200 & 0.4 & 0 & 0.58 & 0.84 \\
15 & 400 & 0.4 & 1 & 1 & 0.94 \\
15 & 800 & 0.4 & 1 & 1 & 1 \\
\hline
15 & 100 & 0.6 & 0 & 0 & 0 \\
15 & 200 & 0.6 & 0 & 0 & 0.06 \\
15 & 400 & 0.6 & 0 & 0 & 0.14 \\
15 & 800 & 0.6 & 0 & 0 & 0.9 \\
\hline
\end{tabular}
\end{table}

\begin{table}
\centering
\caption{Large K Comparison for Large Out-in Ratio: The table compiles the numerical accuracy of BHMC, EigCV, and SMT for large out-in ratio when $K =20$, $N = 5000$}\label{T7}
\begin{tabular}{|c | c | c | c | c | c |}
\hline
K & Degree & Out-In Ratio & BHMC & EigCV & SMT \\
\hline
20 & 100 & 0.2 & 1 & 1 & 0.88 \\
20 & 200 & 0.2 & 1 & 1 & 0.86 \\
20 & 400 & 0.2 & 1 & 1 & 0.92 \\
20 & 800 & 0.2 & 1 & 1 & 0.96 \\
\hline
20 & 100 & 0.4 & 0 & 0 & 0.26 \\
20 & 200 & 0.4 & 0 & 0 & 0.38 \\
20 & 400 & 0.4 & 0 & 0.98 & 0.94 \\
20 & 800 & 0.4 & 1 & 1 & 0.98 \\
\hline
20 & 100 & 0.6 & 0 & 0 & 0.02 \\
20 & 200 & 0.6 & 0 & 0 & 0.04 \\
20 & 400 & 0.6 & 0 & 0 & 0.06 \\
20 & 800 & 0.6 & 0 & 0 & 0.34 \\
\hline
\end{tabular}
\end{table}

\section{Real Data Analysis}\label{data}
We adapt our method for performing clustering of the scRNA-seq datasets. This begs a question whether SBM is the best model for clustering of the scRNA-seq datasets? The case may be easily made that they are not the best model for the clustering of a particular scRNA-seq dataset. However, the SBM has clear advantage in being the simplest, the parsimonious and most explainable model for clustering of scRNA-seq datasets.
This has the added advantage that the parsimonious model tend to be more robust and when the data is as noisy as scRNA-seq dataset then there is a case for developing robust and parsimonious solutions. We believe that SBM based clustering method has the explainable power of decomposing cell networks into multiple homogeneous groups which would encourage many practitioners to adopt them. Additionally, we also validate our method by tracking the performance of our methods against existing state of art methods on the benchmark datasets. 
For the practical implementation, we note that Grabski, Street and Irizarry (2024)\cite{irizarry2024} remarked that most of the clustering methods of scRNA-seq are heuristic in nature and does not account for statistical uncertainty. We attempt to approach this by removing any fine-tuning of the hyperparameters and developing a model based clustering of the scRNA-seq datasets.

\emph{Extracting SBM networks from scRNA-seq datasets.} For applying SMT on a network generated out of scRNA-seq data, we have to carefully design a method to construct SBMs out of scRNA-seq networks. However, most of the data generating process frequently require some fine tuning of free hyperparameters, see Kiselev et al. (2017) \cite{kiselev2017}, Grabski et al. (2024) \cite{irizarry2024} among others. To this end, we perform data filtering and data normalization according to the best practices, see Heumos et al. (2023) \cite{heumos2023}. In general, we filter out cells that have less cell counts or more cell counts. Similarly, we filter out ubiquitous genes and/or rare genes. In particular, we vary the hyperparameter $\eta$ over a grid of values. Subsequently, we use the correlation matrix to construct a similarity matrix. Then, we construct an adjacency matrix by thresholding using the $\kappa^{th}$ quantile of the correlation matrix $R$,i.e.,
\ba
A_{\kappa, \eta} = \begin{cases}
1, & i \neq j, R(i,j) > q_{\kappa}(R_{\eta}), \\
0, & i \neq j, R(i,j) \le q_{\kappa}(R_{\eta}), \\
0, & i = j.
\end{cases}
\ea

For each adjacency matrix, we use SMT to estimate the true number of communities with $\alpha$ varying over $\{0.01, 0.05\}$. For convenience, we denote the estimated number of communities as $\hat{K}_{\alpha, \kappa, \eta}$. Then for every adjacency matrix we obtain the complement of the adjacency matrix. Subsequently for every combination of $\{\alpha, \kappa, \eta \}$, we compute the log likelihood under the SBM for each estimate $\hat{K}_{\alpha, \kappa, \eta}$. Finally, we select the network for which the log likelihood is maximized. This selected network would be the network that would closely resemble to a SBM.
The above approach is a standard approach for any model-based clustering. It is worth noting that practitioners may still chose other methods such as BHMC or EigCV instead of SMT. However, SMT has the comparative advantage that it is less sensitive to the out-in ratio making it a suitable candidate for the network analysis of scRNA-seq datasets.


\subsection{Benchmark Data Analysis} Kiselev et al. (2017) \cite{kiselev2017} identified six reference scRNA-seq datasets as \emph{gold standard} where there is a consensus on the underlying number of clusters. The six reference datasets can be downloaded from Gene Expression Omnibus (GEO) Edgar et al. (2022) \cite{edgar2002} with their ascension given in the Table \ref{T8}. Kiselev et al. (2017) \cite{kiselev2017}
pointed out that the ubiquitous genes and rare genes usually cannot help with the clustering. Hence, we filtered out genes that have less variability than $\eta^{th}$ quantile or more variability than $1 - \eta^{th}$ quantile where $\eta$ taking value in $\{0, 0.05, 0.1, 0.15, 0.2\}$. For every value of $\eta$, we get a subsetted data $z_{\eta}$. We filter out a fraction of genes that might affect the homogeneity of block structure of the resultant network. Then we log normalize the data using $log_2(1 + 10000 \star z_{\eta}(i,j)/\sum_i z_{\eta}(i,j) )$ where $z_{\eta}(i,j)$ is the $(i,j)^{th}$ of $z_{\eta}$. Moreover for every value of $\eta$, we compute the correlation matrix $R_{\eta}$ and use the thresholding method to compute the adjacency matrix as follows

\be
A_{\eta, \kappa}(i,j) = \begin{cases}
1 & R_{\eta}(i,j) > q_{\kappa}(R_{\eta}(i,)), i \neq j \\
0 & R_{\eta}(i,j) \le q_{\kappa}(R_{\eta}(i,)), i \neq j \\
0 & i=j,
\end{cases}
\ee
where $q_{\kappa}$ denote the $\kappa^{th}$ quantile.
For every adjacency matrix in $\{ A_{\eta, \kappa}\}$, we run SMT to obtain an estimate $\hat{K}_{\eta, \kappa, \alpha}$ and $\hat{g}_{\eta, \kappa, \alpha}$ with $\alpha$ varying over $\{0.05, 0.01 \}$. Using the preceeding estimates, we obtain the joint likelihood of adjacency matrix and it's complement assuming SBM. (The complement of adjacency matrix can be computed as we have computed the complement of Erd\"{o}s R\'{e}nyi block). The optimum value of $\eta, \kappa, \alpha$ are those for which the joint log likelihood is maximized. This means that for the estimated hyperparameters $\hat{\eta}, \hat{\kappa}, \hat{K}, \hat{\alpha}$ the adjacency matrix closely resembles SBM. Lastly, we compare our results against other results on benchmark  datasets. It is important to bear in mind that our method stand out in comparison to other methods because we do not require any fine-tuning of hyperparameters. Table \ref{T9} gives the comparative analysis on benchmark datasets. The table indicates that our method's performance is slightly better than SC3.

\begin{table}{h}
\centering
\caption{Summary of six reference datasets. The following is the summary level information on six reference datsets along with their GEO ascension numbers}\label{T8}
\begin{tabular}{| c | c | c | c | c |}
\hline
Datasets & Number of & Number of & Cell Resource & GEO ascension \\
 & Cells & Genes &  & number \\
\hline
Biase & 49 & 25,737 & 2-cell and 4-cell mouse embryos & GSE57249 \\
Yan & 124 & 22,687 & Human preimplantation embryos and & GSE36552 \\
 & & & embryonic stem cells & GSE36552 \\
Goolam & 124 & 41,480 & 4-cell mouse embryos & E-MTAB-3321 \\
Deng & 268 & 22,457 & Mammalian Cells & GSE47519 \\
Pollen & 301 & 23,730 & Human cerebral cortex & SRP041736 \\
Kolodziejczyk & 704 & 38,653 & Mouse embryonic stem cells & E-MTAB-2600\\
\hline
\end{tabular}
\end{table}

\begin{table}[h]
\centering
\caption{Comparison table for estimated number of communities for reference single cell datasets}\label{T9}
\begin{tabular}{|c | c | c | c | c | c|}
\hline
Datasets & References & SMT & SC3 & SIMLR & Seurat \\
\hline
Biase & 3 & 3  & 3 & 7 & 2 \\
Yan & 7 & 6 & 6 & 12 & 3 \\
Goolam & 5 & 5 & 6 & 19 & 4 \\
Deng & 10 & 11 & 9 & 16 & 7 \\
Pollen & 11 & 11 & 11  & 15 & 8 \\
Kolodziejcky & 3 & 10 & 10 & 6 & 7 \\
\hline
\end{tabular}
\end{table}

\subsection{Human Retina Bipolar Single-Cell Analysis} Bipolar cells in the eye play an important role in processing visual information. They connect the photoreceptor cells (input) and ganglion cells (output). The nature of bipolar cell signals and their transmission to ganglion cells is important to learn because it determines how ganglion cells integrate the visual information. Summarily, we do not know how bipolar cells process the complex visual input signals. A part of the difficulty is because bipolar cells are relatively inaccessible lying in between the photoreceptor cells and ganglion cells.

For the real data analysis, we use scRNA-seq data generated from the retina cells of two healthy adult donors. The scRNA-seq data was generated from the retina cells of two healthy adult donors using the 10X Genomics Chromium system. The detailed preprocessing and donor characteristics of our scRNA-seq data can be found in Lyu (2019) \cite{lyu2019}. In total, $33,694$ genes were sequenced over $92583$ cells. The data can be accessed using GSE155228. The scRNA-seq data consist of $33,694$ genes and $30,125$ bipolar cells. Subsequently, we processed the scRNA-seq data in line with the current best practices in the literature while optimizing the hyperparameters. First, we filtered out genes whose variability was less than top $\gamma$ genes. Second, we filtered out cells whose total cell counts (across all genes) were greater than $\delta^{th}$ percentile and less than $(1-  \delta)^{th}$ percentile. This helps filter out low quality cells that may have low quality and/or contaminated cells with large number of cells. Subsequently, we normalize the scRNA-seq data using $log(1 + \beta* x_{ij}/\sum_{j} x_{ij})$ with $x_{ij}$ denoting the $(i,j)^{th}$ entry of scRNA-seq data and $\sum_{j} x_{ij}$ denoting the row sum of the scRNA-seq data in the previous step. This helps us alleviate the noise while boosting the signal strength. Then we compute the correlation matrix $R$. For extracting a single cell network or an adjacency matrix, we use $R$ and the thresholding parameter $\kappa$ in the following manner
\ba
A_{\kappa}(i,j) = \begin{cases}
1, & R_{ij} \ge \kappa, i \neq j , \\
0, & R_{ij} < \kappa, i \neq j ,\\
0, & i =j.
\end{cases}
\ea

We parameterize the adjacency matrix as a function of hyperparameters $A_{\gamma, \delta, \beta, \kappa}$. In particular, we varied $\gamma$ over $\{2000, 4000, 6000\}$, $\delta$ over $\{0.1, 0.15, 0.2, 0.25, 0.3\}$, $\beta$ over $\{5000, 10000, 20000 \}$ and $\kappa$ over $\{0.5, 0.65, 0.75, 0.85, 0.95 \}$. (The range of the grid values was chosen so that it covered the recommended hyperparameter values in the literature.) For every combination of $(\gamma, \delta, \beta, \delta, \kappa)$ and $\alpha$ in $\{0.05, 0.01\}$, we extracted the adjacency matrix and estimated the number of blocks $\hat{K}$ using the SMT procedure. Finally, we estimate $(\hat{\gamma}, \hat{\delta}, \hat{\beta},\hat{\kappa}, \hat{\alpha})$ as those hyperparameter value that maximized the joint log likelihood of adjacency matrix under the assumptions of SBM. As given in the previous paragraph, this gives us the optimum value of $(2000, .25, 20000, 0.65, 0.05)$. For the optimized hyperparameter, the SMT procedure estimated the number of communities as $17$. Using the estimated communities, we plotted the tsne plot for the estimated clusters to get visual confirmation see the figure \ref{F:tsne}.

For each of the $17$ estimated cell clusters, we ran a binomial test to identify gene markers. We tested whether a particular gene is more expressed in a particular cell type compared to the other cell type. This gave a list of gene marker which is included in the Supplement. The obtained gene markers are crucial for understanding the cell type and its function. The $17$ clusters were divided into rod and cone bipolar cells based on the presence or absence of \emph{PRKCA} ( Kim et al. (2008) \cite{kim2008}). We found that the PRKCA gene present in the cluster numbered $8$, $15$ and $17$ with the cluster $8$ being considerably smaller. The rest of the bipolar cells were further divided into ON ($1$, $2$, $3$, $6$, $9$, $10$, $14$) and OFF ($4$, $5$, $7$, $11$, $12$, $16$) cells based on the ON bipolar markers \emph{Isl1} and/or \emph{Grm6} (Elshatory et al. (2007) \cite{elshatory2007}.

 We mapped the set of $17$ clusters to Shekhar et al. (2016) \cite{shekhar2016}'s classification of bipolar cells in mouse retina where they had a total of $15$ cells. Using the gene markers in \cite{shekhar2016}, our estimated $6$ bipolar OFF cells $4$, $5$, $16$, $12$ maps to $Bc3b$, $Bc2$, $Bc3a$, $Bc4$, respectively while bipolar OFF cells $7$ and $11$ map to $Bc3b$ and $Bc1b$ respectively. For bipolar ON cells, we find that our subgroups type $2$ and $9$ map to $Bc6$ and $Bc7$ while $10$, $6$ and $14$ map to $Bc5A$, $Bc 5b$ and $Bc5d$ respectively. Additionally, we find that ON subgroups $13$ map to $Bc5c$ and subgroups $1$ and $3$ map to $Bc8$ and $Bc9$. 

We found three rod bipolar cells $15$, $17$, and a smaller cluster $8$. It has been suggested that based on the current-voltage relationship that the rod bipolar cells have two functional classes Schilardi et al. (2022) \cite{schilardi2022}. Despite these differences, biologists agree on the presence of only one rod bipolar class as was shown in the study of mouse retina. Upon further investigation, we found some difference in their compositions. We found that CABP5 was predominantly present in the rod cluster subgroup $17$ but absent in the rod cluster subgroup $17$. CaBP5 can modulate voltage-gated calcium channels, 5 6 7 TRP channels, and inositol 1,4,5-trisphosphate (IP3) receptor, see Rieke et al. (2008) \cite{rieke2008}. The Figure \ref{F:tsne} suggests that the two subgroups are adjacent.

Curiously, we found a small cluster $8$ of a rod bipolar cells which is widespread in the Figure \ref{F:tsne}. We found that this cell cluster has PDE6A and CNGA1 genes that are strongly expressed.  The PDE6A gene encodes a subunit of the cGMP phosphodiesterase enzyme crucial for the function of rod photo-receptor cells whereas CNGA1 is a subunit of the photoreceptor cyclic-nucleotide gated (CNG) channel which converts light into electrical signals. A degeneration in PDE6A and CNGA1 could result in the loss of rod photo-receptor cells which can have a secondary damage on the bipolar cone cells and cause Retinitis Pigmentosa (RP). Recently, PDE6A gene is targeted for treatment in dogs for the restoration of vision in animals, \cite{mowat2017} while CNGB1 is targeted for mouse for restoration of vision \cite{CNGB1}.

Since our bipolar data comes from donors with advanced age, we hypothesize that cluster $8$ could give us insight into the progressive degradation of rod bipolar cells. Upon further investigation, we found other highly expressed gene in the cluster $8$ such as GNAT1, AIPL1, RHO, SAG and PDE6G (together or in combination) whose deterioration or mutation could indirectly affect or advance age-related-macular degenertion (AMD). For instance, the SAG gene encodes a protein S-antigen that regulates the light sensing process by coupling of rhodopsin to transducing G protein. Any mutation or degradation of SAG and/or RHO gene might result in Oguchi disease (night blindness) or recessive form of Retinitis Pigmentosa (RP). For instance, a study established that some mutation of GNAT1 can result in recessive form of RP, see \cite{GNAT1}. Also the absence of AIPL1 reduced the levels of cGMP phosphodiesterase subunits could cause rapid degeneration in rapid photoreceptor degeneration, see \cite{AIPL1}. However, this is empirically shown but we do not understand the underlying mechanism. We also found that RTBDN is also highly expressed in the cluster $8$ whose deficiency could disrupt metabolic homeostasis, increase the oxidative stress and accelerate the retinal degeneration see \cite{RTBDN}. The above empirical observations suggest that highly expressed genes of cluster 8 and their associations could be useful for tracking retinal degeneration and potentially could be targeted to restore vision.


\section{Discussion}\label{discussion}
We proposed a new approach for estimating the number of cell clusters of a network. We showed that our method is consistent for a moderate regime and because it relies on the second eigenvalue it is computationally cheaper. We highlighted that scRNA-seq datasets tend to be noisy and illustrated that the existing methods tend to fare bad for slightly moderate out-in ratios. Moreover, we highlighted the challenge in extracting networks out of noisy scRNA-seq datasets. To this end, we proposed a likelihood-based method for extracting networks out of scRNA-seq data. Unlike existing methods, our method automatically chooses the set of optimum hyper-parameters maximizing the underlying likelihood.

Finally, we applied our method to the data analysis of six benchmark datasets. We note that our method is slightly better than SC3, the pick of the methods. We also apply our method to the analysis of human bipolar retina cells. Here, as well, we did not choose our hyper-parameters but rather found the best set of hyper-parameters by maximizing the underlying likelihood. We estimated the cell cluster to be $17$ and investigated some of the major gene-markers of the bipolar ON and OFF cells. We underline that in above instances, we choose hyper-parameters as a part of grid-search maximizing the likelihood unlike the rest of the methods. This makes our approach principled and very useful for the analysis of scRNA-seq datasets.
In our analysis, we were able to map the BC-subtypes as given in \cite{shekhar2016}. Curiously, we found an extra small cluster $8$, a part of rod bipolar cells, which is perhaps not a subtype of bipolar cells but caused by the reorganization of the retina cells (due to age) in our donors. We discovered that mutations and absence of the highly expressed genes of cluster $8$ could result in various AMD diseases. Studying these highly expressed genes could be useful for potentially targeting in restoring vision or monitoring retinal degeneration.

\begin{figure}
    \centering
\includegraphics[width = 4 in]{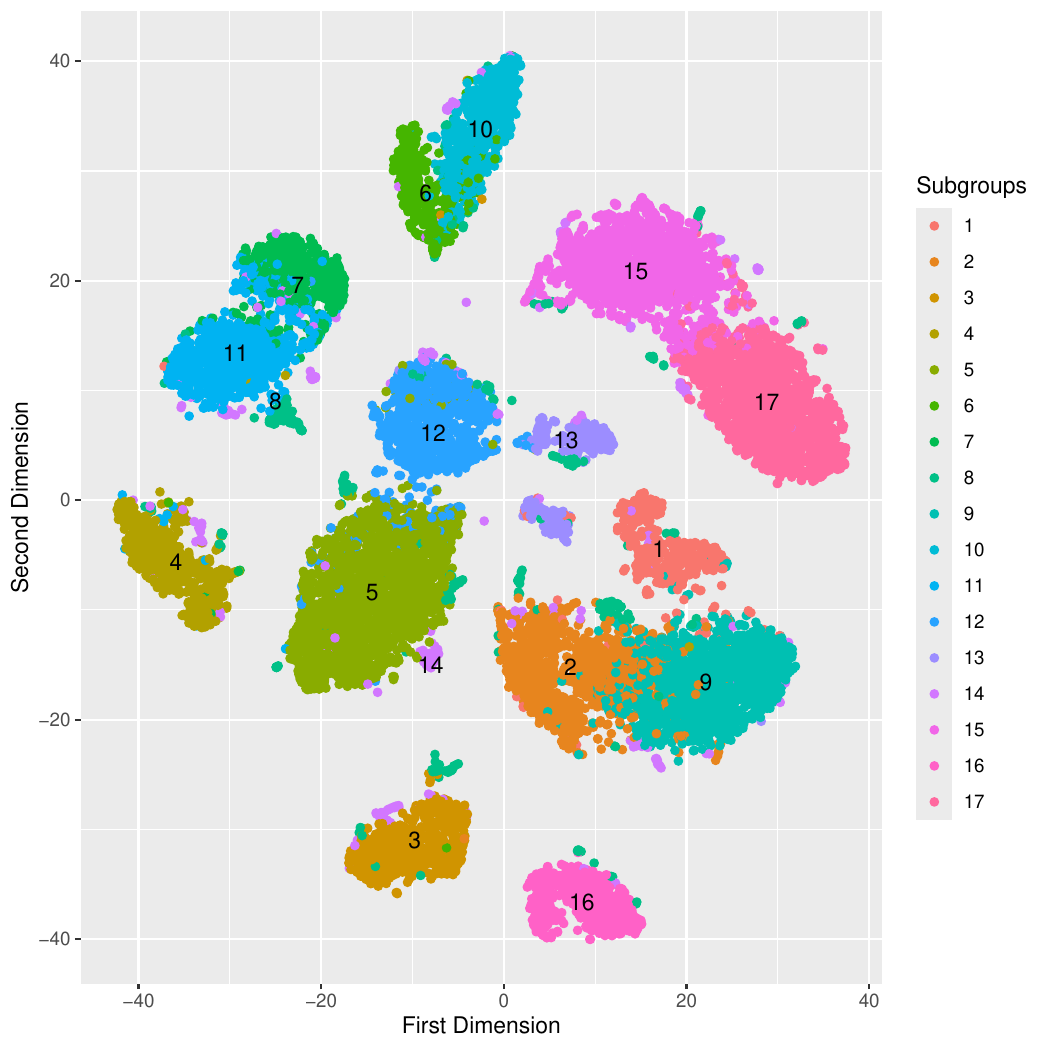}
\caption{The figure draws the tsne plot for the $17$ estimated clusters of the human retina bipolar cells. Here, the scRNA-seq network was extracted using the likelihood method while optimizing for the hyper-parameters and the estimated number of cluster was estimated using the SMT method.}\label{F:tsne}
\end{figure}

\bibliography{reference}
\end{document}